\newtheorem{definition}{Definition}
\newtheorem{theorem}{Theorem}
\newtheorem{proposition}{Proposition}
\newtheorem{corollary}{Corollary}
\renewcommand{\det}{\operatorname{det}}
\newcommand{\Spec}{\operatorname{Spec}}
\renewcommand{\deg}{\operatorname{deg}}
\newcommand{\X}{\operatorname{\mathcal{X}}}
\newcommand{\C}{\operatorname{\mathcal{C}}}
\newcommand{\A}{\operatorname{\hat{A}}}
\newcommand{\Al}{\operatorname{\hat{A}^l}}
\newcommand{\Au}{\operatorname{\hat{A}^u}}
\newcommand{\As}{\operatorname{\hat{A}^s}}
\newcommand{\Aone}{\operatorname{\hat{A}_{I}}}
\newcommand{\Atwo}{\operatorname{\hat{A}_{II}}}
\newcommand{\Pone}{\operatorname{\hat{P}_{I}}}
\newcommand{\Ptwo}{\operatorname{\hat{P}_{II}}}
\newcommand{\G}{\operatorname{\hat{\mathcal{G}}}}
\renewcommand{\hat}{\widehat}
\title{Link Partitioning on Simplicial Complexes \\Using Higher-Order Laplacians}
\author{ {\hspace{1mm}Xinyi Wu}
	\\MIT\\
	\texttt{xinyiwu@mit.edu} \\
	\And
	{\hspace{1mm}Arnab Sarker} 
	\\MIT\\
	\texttt{arnabs@mit.edu} \\
    \And
	{\hspace{1mm}Ali Jadbabaie} \\
	MIT\\
	\texttt{jadbabai@mit.edu} \\
}
\date{}
\begin{document}
\maketitle

\begin{abstract}
	Link partitioning is a popular approach in network science used for discovering overlapping communities by identifying clusters of strongly connected links. Current link partitioning methods are specifically designed for networks modelled by graphs representing pairwise relationships. Therefore, these methods are incapable of utilizing higher-order information about group interactions in network data which is increasingly available. Simplicial complexes extend the dyadic model of graphs and can model polyadic relationships which are ubiquitous and crucial in many complex social and technological systems. In this paper, we introduce a link partitioning method that leverages higher-order (i.e. triadic and higher) information in simplicial complexes for better community detection. Our method utilizes a novel random walk on links of simplicial complexes defined by the higher-order Laplacian—a generalization of the graph Laplacian that incorporates polyadic relationships of the network. We transform this random walk into a graph-based random walk on a lifted line graph—a dual graph  in which links are nodes while nodes and higher-order connections are links—and optimize for the standard notion of modularity. We show that our method is guaranteed to provide interpretable link partitioning results under mild assumptions. We also offer new theoretical results on the spectral properties of simplicial complexes by studying the spectrum of the link random walk. Experiment results on real-world community detection tasks show that our higher-order approach significantly outperforms existing graph-based link partitioning methods.
\end{abstract}


\section{Introduction}
Community detection is a central research topic in network science and provides major insights into the structure and function of complex networks. Although the precise definition of a community largely depends on the application context~\cite{Schaub2017TheMF}, community detection has almost always been regarded as a problem of determining groups of \emph{nodes} who share dense connections~\cite{FORTUNATO201075}. However, nodes often belong to multiple communities in networks, thus complicating the task of community assignments~\cite{Ahn2010LinkCR, Evans2009LineGL,Palla2005UncoveringTO}. For example, in a social network where each node represents a person, a person could be family to someone, a friend to another person, and a co-worker to someone else. 

Many overlapping community detection methods have been proposed to accommodate this problem~\cite{Ahn2010LinkCR,Evans2009LineGL,Evans2010LineGO,Palla2005UncoveringTO,Coscia2012DEMONAL,Gopalan2013EfficientDO,Lancichinetti2009DetectingTO}. 
In this paper, we focus on the “link community” paradigm, where communities are redefined as sets of closely interrelated \emph{links} (edges) rather than nodes~\cite{Ahn2010LinkCR,Evans2009LineGL,Evans2010LineGO}. 
Unlike nodes, links in networks usually form for one dominant reason. 
In the above example, links form between two people either because they come from the same family, share common interests, or work together. This new definition of a community allows us to naturally find overlapping structures by redefining a node's set of communities as incident link communities. Compared with node-based approaches, link partitioning has been found to reveal community structures with better quality in networks from various domains~\cite{Ahn2010LinkCR}. 

Current link partitioning algorithms are exclusively designed for graph representations of networks \cite{Ahn2010LinkCR,Evans2009LineGL,Evans2010LineGO,Shi2013ALC,Orgaz2018AMG,Deng2017FindingOC,Lee2017InverseRL,Zhang2015SymmetricNM}, which only model pairwise relationships between nodes. 
However, higher-order relationships and interactions in groups of more than two entities, which cannot be modelled by a graph, are ubiquitous and essential to understanding the structures and the behavior of complex systems in many fields\cite{Medaglia2015CognitiveNN,Milo2002NetworkMS,Holland1970AMF,Granovetter1973TheSO} (Figure~\ref{fig:intro}). The complexity of real-world networks necessitates a higher-order informed link partitioning method that fundamentally captures polyadic interactions and relationships among nodes.

\begin{figure}[t]
    \centering
    \includegraphics[width = 10cm]{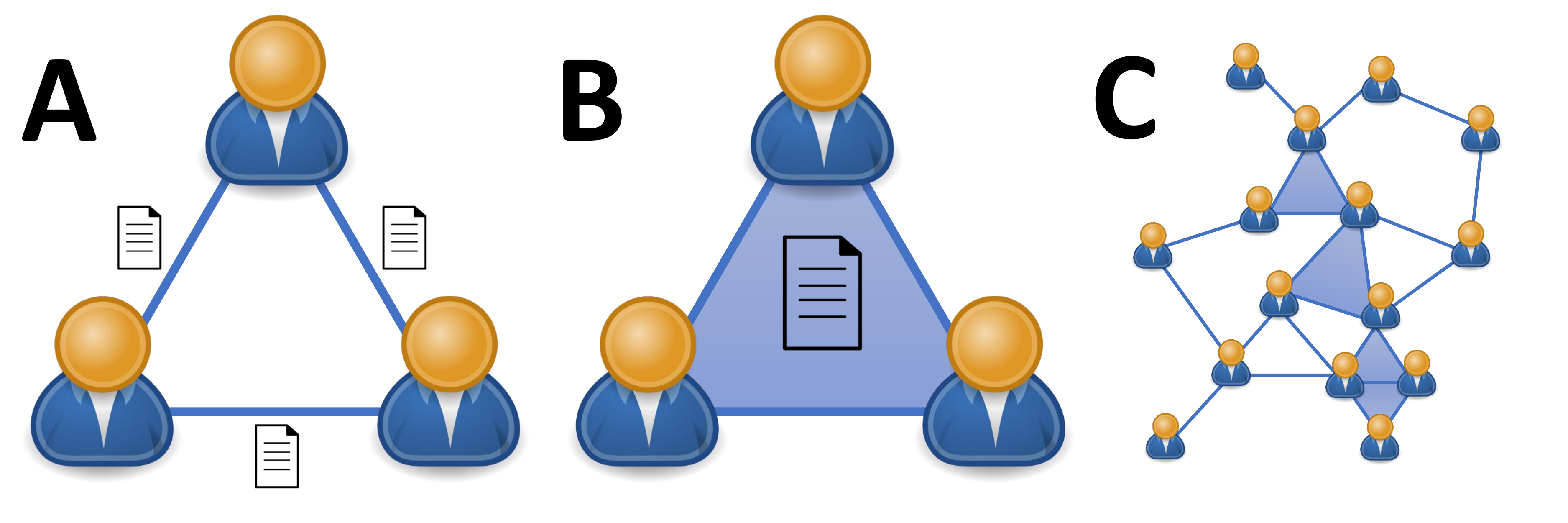}
    \caption{Example of the distinction between pairwise and polyadic relationship. \textbf{A}: Each pair of the three authors has coauthored a (potentially different) paper (three pairwise relationships). \textbf{B}: The three authors have coauthored a paper together (one polyadic relationship). Relationship A is different from B, and graphs are insufficient to distinguish A from B. \textbf{C}: Our method uses simplicial complexes to model both pairwise and polyadic relationships and provides substantially better overlapping community detection results.}
    \label{fig:intro}
\end{figure}

In this paper, we propose a link partitioning method that accounts for higher-order group behavior in networks. Our method models networks with higher-order (i.e. beyond dyadic) information using simplicial complexes and leverages tools from algebraic topology and combinatorial Hodge theory~\cite{Lim2020HodgeLO,Hatcher2002AlgebraicT}. 
In particular, our method derives link communities from link random walks based on higher-order Hodge Laplacians. This can be seen as a higher-order analogue of the use of spectral graph theory for node community detection, where one can interpret the action of normalized graph Laplacians as random walks on graphs.
 Motivated by the importance of triangular structures in social networks \cite{Holland1970AMF,Granovetter1973TheSO,Newman2001ClusteringAP}  and their success in finding community structures in practice \cite{Benson2016HigherorderOO,Yin2017LocalHG,Sotiropoulos2021TriangleawareSS}, we focus on triadic connections encoded by filled triangles, \textit{i.e.} simplices of dimension 2, in simplicial complexes.

Inspired by a novel higher-order random walk recently introduced in Schaub et al.~\cite{Schaub2020RandomWO}, we utilize the notion of a link-based random walk on oriented simplicial complexes to propose our new method for link partitioning. The random walk consists of what we denote as an \textit{upper walk}, a \textit{lower walk}, and a set of self-loops. The proposed link-based random walk can also be transformed into a weighted graph random walk on a \textit{lifted line graph} $\G$ where all oriented links become nodes, while nodes and filled triangles become links (Figure~\ref{fig:lifting}). We take advantage of this transformation and use the transformed walk in conjunction with the Louvain method~\cite{Blondel2008FastUO} to design a modularity optimization algorithm for link partitioning. Since two orientations of each link are present in the lifted line graph $\G$, we show that under some mild conditions that generally hold for real networks, the proposed algorithm will always cluster the two orientations of each link into the same community. That is, the community partition in the lifted line graph $\G$ has a direct interpretation as link communities in the original higher-order network.

In addition to the algorithm itself, we connect to the spectral theory for simplicial complexes by
proving a set of fundamental spectral properties for the lifted line graph random walk $\hat{\textbf{P}}$. We identify the equivalence between the link random walk and a diffusion process on the simplicial complex propagating via the higher-order Laplacian~\cite{Parzanchevski2017SimplicialCS,Mukherjee2016RandomWO}. Hence the lifted line graph random walk $\hat{\textbf{P}}$ is directly related to the random process and their spectral properties are closely connected.
Our results show that the spectrum of the random walk matrix $\hat{\textbf{P}}$ consists of two parts, which we call the ``even'' part and the ``odd'' part. The even part has symmetry in the space on the lifted line graph $\G$, whereas the odd part corresponds to the spectrum of the Laplacian and thus it is connected to the homology of the simplicial complex. There have been constant developments over years trying to understand the spectral properties and building a spectral theory for simplicial complexes \cite{Parzanchevski2017SimplicialCS,Mukherjee2016RandomWO,Parzanchevski2016IsoperimetricII,Horak2011SpectraOC,Kaufman2018HighOR}. Our work provides a new venue for understanding the spectral properties of simplicial complexes through the lifted line graph $\G$ and the random walk $\hat{\textbf{P}}$.

Finally, we conduct numerical experiments on a number of community detection tasks. We first show that using higher-order information improves the identification of closely interrelated links in a synthetic network. We then test our proposed method in seven real-world networks where higher-order information is available. On average, our method has $12\%$ improvement over the graph baselines. The comparisons suggest that higher-order information is valuable and thus our higher-order framework overall provides better solutions.

In summary, our paper develops a link partitioning method that incorporates higher-order information in networks.
\begin{itemize}
    \item Theoretically, the proposed method generalizes the use of graph Laplacian for community detection to higher-order network data structure. It provides new insights into the spectral theory for simplicial complexes.
    \item Algorithmically, the method takes advantage of the lifted line graph transformation and the Louvain method for community detection. We show that optimizing modularity in the lifted line graph using the Louvain method is guaranteed to produce interpretable link partitioning results under realistic conditions.
    \item Empirically, we find that our higher-order method results in substantial improvements in discovering overlapping community structure. 
\end{itemize}
Our work uses simplicial complexes to exploit rich and valuable higher-order network data and opens up new possibilities for higher-order network problems that apply across a broad set of domains.

\section{Related Work}
\paragraph{\textbf{Link-based Community Detection}}
The problem of overlapping community detection using links in networks was first introduced in two seminal works \cite{Ahn2010LinkCR,Evans2009LineGL}.  
Evans and Lambiotte \cite{Evans2009LineGL} consider line graphs, a transformation of a network where links become nodes and nodes become links, and optimize weighted modularity scores of the line graph to find communities of links. 
Ahn et al. \cite{Ahn2010LinkCR} propose a pairwise similarity metric between links and builds a hierarchical dendrogram to determine link communities. The paper also proposes an objective function called partition density and finds communities at its optimal value in the dendrogram. 
Since then, several methods have been developed as improvements of these two methods \cite{Evans2010LineGO,Shi2013ALC,Orgaz2018AMG,Deng2017FindingOC,Lee2017InverseRL,Zhang2015SymmetricNM}. Nonetheless, although higher-order network analysis has become a fundamental research topic in network science~\cite{Benson2016HigherorderOO,Benson2018SimplicialCA,Benson2021HigherorderNA,Milo2002NetworkMS}, to our best knowledge, there has not been any work done incorporating higher-order information about simultaneous interactions in groups into link partitioning.

\paragraph{\textbf{Higher-Order Network Analysis}}
Traditional network science studies complex networks using graph representations, which model pairwise interactions between entities via links.  Yet many real-world complex systems involve simultaneous relationships of more than two entities. Network scientists thus have to go beyond graph-based models in order to study the higher-order structures and dynamics of complex systems~\cite{Benson2021HigherorderNA}. 
Higher-order network analysis uses the ideas of motifs~\cite{Benson2016HigherorderOO,Yin2017LocalHG}, hypergraphs~\cite{Chodrow2021GenerativeHC}, simplicial complexes~\cite{Benson2018SimplicialCA,Schaub2020RandomWO}, multilinear and tensor algebra~\cite{Benson2015TensorSC}, \textit{etc.}, to model complex systems with higher-order information. 
Among these models, simplicial complexes can be seen as a generalization of graphs from the point of view of algebraic topology and thus allow researchers to use rich mathematical tools. Specifically, the higher-order Hodge Laplacian can be seen as a generalization of the powerful graph Laplacian.  Recent works have used higher-order Laplacians for trajectory prediction~\cite{Glaze2021PrincipledSN}, signal processing~\cite{Barbarossa2020TopologicalSP}, identifying tie strength~\cite{Sarker2021HigherOI}, and label propagation~\cite{Mukherjee2016RandomWO}. 
 In addition, Ebli and Spreemann~\cite{Ebli2019ANO} use the higher-order Laplacian to cluster data supported on links of simplicial complexes. However, this specific method is not appropriate for community detection tasks as it (i). does not cluster links that are closely interrelated in the way communities are usually defined; (ii). often leaves a large part of the links in networks unclustered. 

 \section{Preliminaries}
 In this section, we review the background material. We first introduce the standard notion of modularity from a random walk perspective in Section~\ref{MODULARITY}. This motivates our method of using a random walk on links to find good link partitions in terms of modularity. We then introduce the basic concepts of simplicial complexes and higher-order Hodge Laplacians in Section~\ref{SC} and Section~\ref{HL}, respectively. They build the foundation of our link partitioning method on higher-order networks.
\vspace{-1ex}

\subsection{Modularity from a Random Walk Perspective}\label{MODULARITY}
To motivate our method of using a random walk on links for link partitioning, let us first consider the well-known concept of \emph{modularity}~\cite{Newman2006ModularityAC} in terms of random walks on nodes.
Given an undirected (weighted) graph $\mathcal{G}$ with adjacency matrix $A$,  the modularity $Q$  used for evaluating quality of a node partition $\mathcal{P}$, is defined as 
\begin{equation}
    Q = \frac{1}{2m}\sum_{C\in \mathcal{P}}\sum_{i,j\in C}[A_{ij}-\frac{k_ik_j}{2m}]\,, 
\label{eq:mod}
\end{equation}
where $k_i = \sum_{j}A_{ij}$ is the sum of the weights of the links attached to node $i$, $2m = \sum_{ij}A_{i,j}$ is the sum of all of the link weights in the graph, and $C$ runs over all the communities in $\mathcal{P}$. 

A random walk interpretation of the modularity $Q$ is the follows \cite{Delvenne12755,Lambiotte2008LaplacianDA}: denote the probability of a random walker on node $i$ at time step $t$ as $p_{i,t}$, where the dynamics are given by the standard unbiased random walk $p_{i,t+1} = \sum_{j}\frac{A_{ij}}{k_j}p_{j,t}$. When the given network is undirected, connected and non-bipartite, one can show that the stationary solution of the dynamics is $p_i^\star = k_i/2m$. 
Consider a community $C\in \mathcal{P}$. If the system is at equilibrium, the probability that a random walker stays in $C$ during two successive time steps is $\sum_{i,j\in C}\frac{A_{ij}}{k_j}\frac{k_j}{2m}$, whereas the probability of finding two independent walkers in $C$ is $\sum_{i,j\in C}\frac{k_ik_j}{(2m)^2}$. Hence one can reinterpret $Q$ as a summation over the communities of the difference of the two probabilities. This interpretation suggests a natural generalization of modularity that allows one to tune its resolution as follows: $Q$ is based on paths of length one,  but one can generalize it to paths of arbitrary length $t$ as
\begin{equation}
    R(A,t) = \frac{1}{2m}\sum_{C\in\mathcal{P}}\sum_{i,j\in C}[(T^t)_{ij}k_j - \frac{k_ik_j}{2m}]\,,
\label{eq:stability}
\end{equation}where $T_{ij} = A_{ij}/k_j$. $R(A,t)$ is called the \emph{stability} of the partition~\cite{Delvenne12755} and optimizing the stability typically leads to partitions made of larger and larger communities for increasing $t$~\cite{Delvenne12755,Lambiotte2008LaplacianDA}.

The above random walk formulation of modularity suggests that one should look at a random walk process traversing the links of a network to find communities of links. Yet previous link random walks are solely designed for graphs~\cite{Evans2009LineGL,Evans2010LineGO}. In order to define a link random walk on a higher-order network, we introduce simplicial complexes and higher-order Hodge Laplacians for higher-order network modelling.

\subsection{Simplicial Complexes}\label{SC}
Let $V$ be a finite set of nodes. 
A \emph{k-simplex} $\mathcal{S}^k$, often referred to as a simplex of dimension $k$, is a subset of $V$ with $k+1$ elements. 
A \emph{simplicial complex} (SC) $\mathcal{X}$ is a set of simplices with the property that if $\mathcal{S} \in  \mathcal{X}$, then all subsets of $\mathcal{S}$ are also in $\mathcal{X}$. 
We use $\X^k$ to denote the subset of all $k$-simplices in $\X$. 
A $k$-simplex has $k+1$ subsets of dimension $k-1$, which are called \emph{faces}. 
If $\mathcal{S}^{k-1}$ is a face of simplex $\mathcal{S}^k$, 
$\mathcal{S}^k$ is called a $\emph{co-face}$ of $\mathcal{S}^{k-1}$~\cite{Hatcher2002AlgebraicT}. 
While these definitions are combinatorial, geometrically one can think of 0-simplices as nodes, 1-simplices as links (edges), 2-simplices as filled triangles, 3-simplices as tetrahedra, and so forth. A graph can thus be interpreted as an SC where all simplices have dimension at most $1$.

Two $k$-simplices in an SC $\mathcal{X}$ are \emph{upper adjacent} if they are both faces of the same $(k+1)$-simplex and are \emph{lower adjacent} if both share a common face. 
For any simplex in $\X$, we define its degree, denoted by $\deg(\cdot)$, to be the number of its co-faces.  
  
We endow arbitrarily each simplex with an \emph{orientation}. An orientation can be seen as a chosen ordering of the elements of a simplex, modulo even permutations. That is, for a $k$-simplex $\mathcal{S}^k = \{i_0,i_1,...,i_k\} \in \X^k$, an orientation of $\mathcal{S}^k$ would be $[i_0,i_1,...,i_k]$. Performing an even permutation ($e.g.\,[i_1,i_2,i_0,...,i_k]$) leads to an equivalent orientation. For simplicity, we choose the reference orientation of the simplices induced by the ordering of the node labels $\{s_i^k = [i_0,...,i_k],i_0<...<i_k\}$. 
\vspace{-1ex}
\subsection{Boundary Operators and Hodge Laplacians}\label{HL}
For each dimension $k$, we define the finite-dimensional vector space $\C_k$ with coefficients in $\mathbb{R}$, whose basis elements are the oriented simplices $s_i^k$. An element $c_k \in \C_k$ is called a \emph{k-chain}, and is a linear combination of the basis elements $c_k = \sum_{i}a_is_i^k$. Thus, each element $c_k \in \C_k$ can be represented by a vector $a = [a_1,...,a_{n_k}]$, where $n_k = |\X^k|$ is the number of $k$-simplices in $\X$. A change of the orientation of the basis $s_i^k$ is defined to be a flip of the sign of its coefficient $a_i$. 

Given the space of chains $\C_k$, we define the linear boundary maps $\partial_k: \C_k \to \C_{k-1}$~\cite{Hatcher2002AlgebraicT} as
\vspace{-1ex}
\begin{equation*}
    \partial_k([i_0,...,i_k]) = \sum_{j=0}^k(-1)^j[i_0,...,i_{j-1},i_{j+1},...,i_k]\,.
\vspace{-1ex}
\end{equation*}
$\partial_k$ maps any $k$-chain to a sum of its faces, with the chosen orientations. Once fixing a basis, these operators can be represented by a matrix. We denote the matrix representation of $\partial_k$ by $B_k$, where each column has exactly $k+1$ nonzero entries. Meanwhile, for each $\partial_k$, there exists a co-boundary map $\partial_k^\top: \C_k \to \C_{k+1}$, which can be represented by the adjoint of $B_k$, denoted $B_k^\top$. Based on the sequence of boundary and co-boundary operators, the $k^{th}$ \emph{Hodge Laplacian}~\cite{Lim2020HodgeLO}, also known as the Eckmann Laplacian~\cite{Eckmann1944HarmonischeFU}, is defined as 
\begin{equation*}
    L_k = B_k^\top B_k + B_{k+1}B_{k+1}^\top\,.
\end{equation*}
\vspace{-0.3ex}We note that the standard combinatorial graph Laplacian is a special case of the Hodge Laplacian, as it corresponds to $L_0 = B_1B_1^\top$, since $B_0 = 0$. The matrix $L_1$, which is called the Hodge 1-Laplacian, is the primary focus of this paper.

\section{Methodology}\label{METHOD}
We now develop our higher-order link partitioning methodology. In Section~\ref{stochastic_lifting}, we introduce the normalized Hodge 1-Laplacian $\mathcal{L}_1$ which gives rise to a higher-order informed random walk $\hat{\textbf{P}}$ on links of a simplicial complex $\X$. We show that this random walk can be seen as a graph random walk on a lifted line graph transformation $\G$ of $\X$. Then in Section~\ref{Louvain}, we find the optimal link partition through optimizing the modularity in $\G$ using the Louvain method. We prove that our method is guaranteed to produce interpretable link partitioning results under mild conditions. Finally, in Section~\ref{Phat}, we derive several spectral properties about $\hat{\textbf{P}}$ and connect to the spectral theory for simplicial complexes.

Throughout the rest of the paper, we assume that the graph skeleton of the SC $\mathcal{X}$ (i.e. we ignore all the 2-simplices) is undirected, connected and non-bipartite.
\vspace{-1ex}
\subsection{Stochastic Lifting of the Normalized Hodge 1-Laplacian}\label{stochastic_lifting}

Inspired by~\cite{Schaub2020RandomWO}, given an SC modelled network $\mathcal{X}$, we define the normalized Hodge 1-Laplacian $\mathcal{L}_1$ as follows.

\begin{definition}
Consider a simplicial complex $\mathcal{X}$ up to dimension 2 with boundary operators $B_1$ and $B_2$. The normalized Hodge Laplacian matrix is defined as
\begin{equation*}
    \mathcal{L}_1 = [B_1^\top B_1 + B_2B_2^\top]D_{tot}^{-1}\,,
\end{equation*}
where $D_{tot}$ is a diagonal matrix with diagonal entries $[D_{tot}]_{[i,j],[i,j]} = deg(i) + deg(j) + 3deg(i,j)$.
\end{definition} 
A way to understand the action of the Hodge 1-Laplacian on any vector $f \in \mathbb{R}^{n_1}$ representing a flow on links is to consider a higher-dimensional, lifted state space, where both possible orientations for each link is present~\cite{Schaub2020RandomWO}. Specifically, we can utilize a natural inclusion map $V$ which maps any link-flow by explicitly representing both orientations. We choose appropriate bases such that the matrix representation of $V$ is 
\begin{equation*}
    V = \begin{bmatrix}
+I_{n_1}\\
-I_{n_1}
\end{bmatrix} \in \mathbb{R}^{2n_1\times n_1}\,,
\end{equation*}
where $I_{n_1}$ is the identity matrix of dimension $n_1$, the number of links.
Given a link-flow $f$, the lifted link-flow is $\hat{f} = Vf = (f^\top,-f^\top)^\top$. Throughout the rest of the paper, we use $\hat{\cdot}$ to indicate objects that are related to the lifted space. 

Similarly, a lifting of a matrix using $V$ is defined as:

\begin{definition}
We call a matrix $N \in \mathbb{R}^{2n_1\times 2n_1}$ a lifting of a matrix $M \in \mathbb{R}^{n_1\times n_1}$ if
\begin{equation}
    V^\top N = MV^\top\,.
\label{def:lifting}
\end{equation}
\end{definition}
This definition implies that if $M$ has a lifting $N$, then $M = \frac{1}{2}V^\top NV$, as $V^\top V = 2I_{n_1}$. As a result, the action of $M$ can be interpreted in terms of lifting $V$, then a linear transformation represented by $N$, and finally a projection $\frac{1}{2} V^\top$ into the original lower dimensional space. In this context, ``projection'' just refers to a mapping into a lower-dimensional space. 

We remark that this ``lift-propagate-project'' procedure with $M$ equals the Hodge 1-Laplacian $L_1$ is in fact equivalent to a higher-order diffusion dynamic called the ``expectation process'' proposed in~\cite{Parzanchevski2017SimplicialCS}. Under appropriate normalization, the process will converge to the kernel of the Laplacian and thus is related to the homology of $\X$~\cite{Parzanchevski2017SimplicialCS,Mukherjee2016RandomWO}. 
 
 We now show that the lifting of our normalized Hodge Laplacian $\mathcal{L}_1$ corresponds to a random walk on a lifted line graph transformation  $\G$ of the original simplicial complex $\X$. This is a higher-order generalization of the relationship between the normalized graph Laplacian and the random walk on graph in spectral graph theory~\cite{Chung1996SpectralGT}.  To state our results compactly, we define the following matrices:
 $\hat{B}_1 := B_1V^\top = \begin{bmatrix}
 B_1 & -B_1
 \end{bmatrix}, \hat{B}_2 := VB_2 = \begin{bmatrix}
 B_2 & -B_2
 \end{bmatrix}^\top$.
We denote $\hat{B}_i^+$ and $\hat{B}_i^-$ as the positive part and the negative part of these matrices, respectively, i.e. $(\hat{B}_i^+)_{ab} = \max\{(\hat{B}_i)_{ab}, 0\}, (\hat{B}_i^-)_{ab} = \max\{(-\hat{B}_i)_{ab}, 0\}$. 
\begin{theorem}
The matrix $(I - \mathcal{L}_1)/2$ has a lifting, i.e. there exists a stochastic matrix $\hat{\textbf{P}}$ such that
$$(I - \mathcal{L}_1)V^\top = 2V^\top \hat{\textbf{P}}\,,$$
where $\hat{\textbf{P}}$ corresponds to a random walk on an undirected lifted line graph $\G$ with adjacency matrix $\hat{A} = \hat{A}^l + \hat{A}^u + \hat{A}^s$, where
\begin{itemize}
    \item $\hat{A}^l = [\hat{B}^-_1]^\top\hat{B}^+_1+[\hat{B}^+_1]^\top\hat{B}^-_1$;
    \item $\hat{A}^u = \hat{B}^+_2[\hat{B}^-_2]^\top+\hat{B}^-_2[\hat{B}^+_2]^\top$;
    \item $\hat{A}^s$ is a diagonal matrix with $[\hat{A}^s]_{[i,j],[i,j]} = deg(i) + deg(j) + 3deg(i,j)$.
\end{itemize}
\label{thm:lifting}
\end{theorem}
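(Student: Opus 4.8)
The plan is to produce $\hat{\mathbf{P}}$ explicitly as the standard column-stochastic random-walk matrix $\hat{\mathbf{P}} = \hat{A}\hat{D}^{-1}$ on the weighted graph $\G$, where $\hat{D}$ is the diagonal degree matrix of $\hat{A}$, and then verify the lifting identity by direct computation. First I would rewrite the target equation: since $\mathcal{L}_1 = L_1 D_{tot}^{-1}$ with $L_1 = B_1^\top B_1 + B_2 B_2^\top$, and since the diagonal $D_{tot}$ satisfies $D_{tot}^{-1}V^\top = V^\top \hat{D}_{tot}^{-1}$ and $V^\top \hat{D}_{tot} = D_{tot}V^\top$ for $\hat{D}_{tot} = \mathrm{diag}(D_{tot}, D_{tot})$, the claim $(I-\mathcal{L}_1)V^\top = 2V^\top \hat{\mathbf{P}}$ becomes, after the degree count below licenses writing $\hat{\mathbf{P}} = \tfrac12\hat{A}\hat{D}_{tot}^{-1}$ and clearing the diagonal normalizations, the purely combinatorial identity
\begin{equation*}
V^\top \hat{A} = (D_{tot} - L_1)V^\top. \tag{$\star$}
\end{equation*}
So the theorem reduces to establishing $(\star)$ together with checking that $\hat{A}$ is a nonnegative symmetric matrix whose column sums equal $2[D_{tot}]_{[i,j],[i,j]}$, which simultaneously makes $\hat{\mathbf{P}}$ stochastic and pins down the factor $2$.

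The second step is the algebraic heart: pushing the off-diagonal adjacency blocks through $V^\top$ and recovering the negated Hodge pieces. I would write $\hat{B}_1 = [B_1\ {-B_1}]$ and split into nonnegative parts, obtaining the clean block forms $\hat{B}_1^+ = [B_1^+\ B_1^-]$ and $\hat{B}_1^- = [B_1^-\ B_1^+]$ (and analogously $\hat{B}_2^\pm$ as stacked blocks of $B_2^\pm$). Substituting these into $\hat{A}^l = [\hat{B}_1^-]^\top\hat{B}_1^+ + [\hat{B}_1^+]^\top\hat{B}_1^-$ shows it has the symmetric block structure $\begin{bmatrix} P & Q \\ Q & P \end{bmatrix}$ with $P = (B_1^+)^\top B_1^- + (B_1^-)^\top B_1^+$ and $Q = (B_1^+)^\top B_1^+ + (B_1^-)^\top B_1^-$; consequently $V^\top \hat{A}^l = (P-Q)V^\top$. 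The key sign cancellation is that expanding $B_1^\top B_1 = (B_1^+ - B_1^-)^\top(B_1^+ - B_1^-)$ yields exactly $Q - P$, so $P - Q = -B_1^\top B_1$ and hence $V^\top \hat{A}^l = -B_1^\top B_1 V^\top$. The identical manipulation with $\hat{B}_2$ gives $V^\top \hat{A}^u = -B_2 B_2^\top V^\top$; crucially these block identities are insensitive to the orientation-induced signs inside $B_1,B_2$, since they follow formally from the definitions $\hat{B}_1 = B_1V^\top$ and $\hat{B}_2 = VB_2$ regardless of the entries.

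For the diagonal and degree bookkeeping I would observe that $\hat{A}^s$ is by definition equal to $\hat{D}_{tot}$, so $V^\top \hat{A}^s = D_{tot}V^\top$; adding the three contributions then yields $(\star)$ immediately, namely $V^\top \hat{A} = -B_1^\top B_1 V^\top - B_2 B_2^\top V^\top + D_{tot}V^\top = (D_{tot} - L_1)V^\top$. To confirm stochasticity and the factor $2$, I would compute column sums of each block: using $\sum_\beta(\hat{B}_1^\pm)_{v\beta} = \deg(v)$ gives column sum $\deg(i)+\deg(j)$ for $\hat{A}^l$; using the fact that each triangle column of both $\hat{B}_2^+$ and $\hat{B}_2^-$ has exactly three unit entries gives column sum $3\deg(i,j)$ for $\hat{A}^u$; and $\hat{A}^s$ contributes $\deg(i)+\deg(j)+3\deg(i,j)$. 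These total $2[\deg(i)+\deg(j)+3\deg(i,j)]$, so $\hat{D} = 2\hat{A}^s = 2\hat{D}_{tot}$ and each column of $\hat{\mathbf{P}} = \hat{A}\hat{D}^{-1}$ sums to one. Symmetry of $\hat{A}$ (each block $P,Q$ and its upper-walk analogues being symmetric) certifies that $\G$ is undirected, and nonnegativity of all three blocks certifies $\hat{\mathbf{P}}$ is a genuine transition matrix.

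I expect the main obstacle to be the combinatorial degree accounting rather than the block algebra: one must track the orientation signs in $B_2$ carefully enough to see that every triangle column of both $\hat{B}_2^+$ and $\hat{B}_2^-$ sums to exactly three, which is what produces the coefficient $3$ in front of $\deg(i,j)$ in $D_{tot}$ and makes the column sums collapse to a clean multiple of $D_{tot}$. The conceptual crux, already visible in the sign identity $P - Q = -B_1^\top B_1$, is that the lifting is exactly the device converting the indefinite off-diagonal part $-B^\top B$ of the Hodge Laplacian into a bona fide nonnegative adjacency matrix on the doubled orientation space, so that the diffusion governed by $I-\mathcal{L}_1$ becomes an honest random walk on $\G$.
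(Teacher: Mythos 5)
Your proposal is correct and follows essentially the same route as the paper: reduce the claim to the unnormalized identity $V^\top \hat{A} = (D_{tot}-L_1)V^\top$, identify $\hat{A}^s$ with $\operatorname{diag}(D_{tot},D_{tot})$, and divide by the degree matrix $2D_{tot}$ to obtain the stochastic $\hat{\textbf{P}}$. The only difference is that you prove the key identity $-L_1V^\top = V^\top(\hat{A}^l+\hat{A}^u)$ directly via the block decomposition of $\hat{B}_1^{\pm}$, $\hat{B}_2^{\pm}$ and the cancellation $P-Q=-B^\top B$, whereas the paper imports it from Schaub et al.\ (Lemma 3.2); your degree bookkeeping confirming the column sums equal $2[D_{tot}]_{[i,j],[i,j]}$ is also correct.
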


 The proof is in Appendix~\ref{app:pf}.  In the lifted line graph $\G$, nodes are oriented links of $\X$, while links are nodes
and filled triangles of $\X$. Figure~\ref{fig:lifting} gives an illustration of the lifting process, the lifted line graph $\G$, and the three components of $\A$. One can interpret $\Al$ as describing the connections between lower adjacent links that are aligned with respect to the reference orientations, while $\Au$ as describing the connections between upper adjacent links that are not aligned respect to the reference orientations. Finally, by construction, the random walk $\hat{\textbf{P}}$ contains self-loops encoded by $\As$ and thus could be seen as a lazy random walk. If we denote $d_{[i,j]} = 2(deg(i) + deg(j) + 3deg(i,j))$, $\hat{\textbf{P}}$ has an interpretation in words as follows: starting from state $[i,j]$, with probability $1/2$, the walker remains in the same state; with probability $(deg(i)+deg(j))/d_{[i,j]}$, the walker transitions to a lower adjacent state (\textit{lower walk}); with probability $3deg(i,j)/d_{[i,j]}$, the walker transitions to an upper adjacent state (\textit{upper walk}).

Moreover, the adjacency matrix $\A$ of $\G$ has a special block form: 
\begin{proposition}
The adjacency matrix $\hat{A}$ of $\hat{\mathcal{G}}$ is symmetric and has the following block form:
\begin{equation*}
    \begin{bmatrix}
\hat{A}_I & \hat{A}_{II} \\
\hat{A}_{II} & \hat{A}_{I}
\end{bmatrix}
\end{equation*}
\label{prop:block}
\end{proposition}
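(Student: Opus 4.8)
The plan is to exhibit a single symmetry of the lifted line graph $\G$—the permutation that swaps the two orientations of every link—and to show that $\A$ is invariant under conjugation by this permutation. The advertised block form turns out to be exactly this invariance, while symmetry of $\A$ follows immediately from the defining formulas for $\Al$, $\Au$, and $\As$. Let $W = \begin{bmatrix} 0 & I_{n_1} \\ I_{n_1} & 0\end{bmatrix}$ denote the orientation-swap permutation, so that $W = W^\top$ and $W^2 = I_{2n_1}$; geometrically $W$ sends the reference orientation coordinate of each link to its reversed-orientation coordinate.

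First I would record how the positive and negative parts of the lifted boundary matrices decompose blockwise. Since $\hat{B}_1 = B_1 V^\top = [\,B_1,\ -B_1\,]$, taking entrywise positive and negative parts block-by-block gives $\hat{B}_1^+ = [\,B_1^+,\ B_1^-\,]$ and $\hat{B}_1^- = [\,B_1^-,\ B_1^+\,]$; likewise, writing $\hat{B}_2 = VB_2 = \begin{bmatrix} B_2 \\ -B_2\end{bmatrix}$, one gets $\hat{B}_2^+ = \begin{bmatrix} B_2^+ \\ B_2^-\end{bmatrix}$ and $\hat{B}_2^- = \begin{bmatrix} B_2^- \\ B_2^+\end{bmatrix}$. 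These decompositions translate into the two clean intertwining relations $\hat{B}_1^- = \hat{B}_1^+ W$ and $\hat{B}_2^- = W\hat{B}_2^+$. Note that $W$ acts on the $2n_1$ \emph{columns} of $\hat{B}_1$ but on the $2n_1$ \emph{rows} of $\hat{B}_2$, reflecting that the lifting $V$ sits on opposite sides of the two boundary maps.

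Next I would verify the two claims. For symmetry, transposing $\Al = [\hat{B}_1^-]^\top\hat{B}_1^+ + [\hat{B}_1^+]^\top\hat{B}_1^-$ merely interchanges the two summands, so $(\Al)^\top = \Al$; the identical argument gives $(\Au)^\top = \Au$, and $\As$ is diagonal, hence $\A = \Al + \Au + \As$ is symmetric. For the block form it suffices to show $W\A W = \A$, because for any block matrix $M = \begin{bmatrix} P & Q \\ R & S\end{bmatrix}$ one has $WMW = \begin{bmatrix} S & R \\ Q & P\end{bmatrix}$, so $W$-invariance is equivalent to $P = S$ and $Q = R$, i.e.\ to the stated form with $\Aone = P$ and $\Atwo = Q$. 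Using $\hat{B}_1^- = \hat{B}_1^+ W$ together with $W^2 = I$ and $W = W^\top$, one computes $W[\hat{B}_1^-]^\top\hat{B}_1^+ W = [\hat{B}_1^+]^\top\hat{B}_1^-$ and symmetrically $W[\hat{B}_1^+]^\top\hat{B}_1^- W = [\hat{B}_1^-]^\top\hat{B}_1^+$, whence $W\Al W = \Al$; the analogous computation with $\hat{B}_2^- = W\hat{B}_2^+$ yields $W\Au W = \Au$. Finally $W\As W = \As$, since $\As$ is diagonal and its entry for a link depends only on the unoriented edge—the quantity $\deg(i)+\deg(j)+3\deg(i,j)$ agrees for $[i,j]$ and its reverse—so $W$ only permutes identical diagonal entries among paired coordinates. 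Summing the three pieces gives $W\A W = \A$, completing the proof.

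The step I expect to require the most care is bookkeeping the sidedness of $W$. Because the lifting is applied on the codomain side of $B_1$ (via $\hat{B}_1 = B_1V^\top$) but on the domain side of $B_2$ (via $\hat{B}_2 = VB_2$), the swap acts on the right in the $\Al$ computation and on the left in the $\Au$ computation, and one must keep the transposes together with the $W = W^\top$ and $W^2 = I$ reductions consistent throughout. Once the two intertwining relations $\hat{B}_1^- = \hat{B}_1^+ W$ and $\hat{B}_2^- = W\hat{B}_2^+$ are in hand, the remaining manipulations are short and symmetric.
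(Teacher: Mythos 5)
Your proof is correct. It differs in packaging from the paper's: the paper's proof simply says to write out the blocks of $\hat{A}^l$ and $\hat{A}^u$ explicitly from the formulas in Theorem~\ref{thm:lifting} (e.g.\ the top-left block of $\hat{A}^l$ is $[B_1^-]^\top B_1^+ + [B_1^+]^\top B_1^-$, which one observes equals the bottom-right block, and so on), whereas you recast the block structure as invariance under conjugation by the orientation-swap involution $W$ and derive it from the intertwining relations $\hat{B}_1^- = \hat{B}_1^+W$ and $\hat{B}_2^- = W\hat{B}_2^+$. The two arguments verify the same identities, but your formulation is more structural: it isolates the single symmetry of $\hat{\mathcal{G}}$ responsible for the block form, it correctly handles the sidedness asymmetry between $B_1$ (lifted on the codomain) and $B_2$ (lifted on the domain), and the statement $W\hat{A}W = \hat{A}$ is exactly the fact that later drives Propositions~\ref{prop:pblock}--\ref{prop:even} (eigenvectors of the form $[x,x]^\top$ are the $+1$-eigenspace of $W$), so your lemma is reusable where the paper's brute-force block computation is not. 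The cost is only the small amount of extra bookkeeping you already flag.
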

\vspace{-2ex}
In words, $\Aone$ describes the connectivity between each pair of oriented links either both with the reference orientations or both with the reverse orientations. $\Atwo$ describes the connectivity between each pair of oriented links such that one has the reference orientation and the other has the reverse orientation.

\begin{figure*}
    \centering
    \vskip 0pt
    \includegraphics[width =\linewidth]{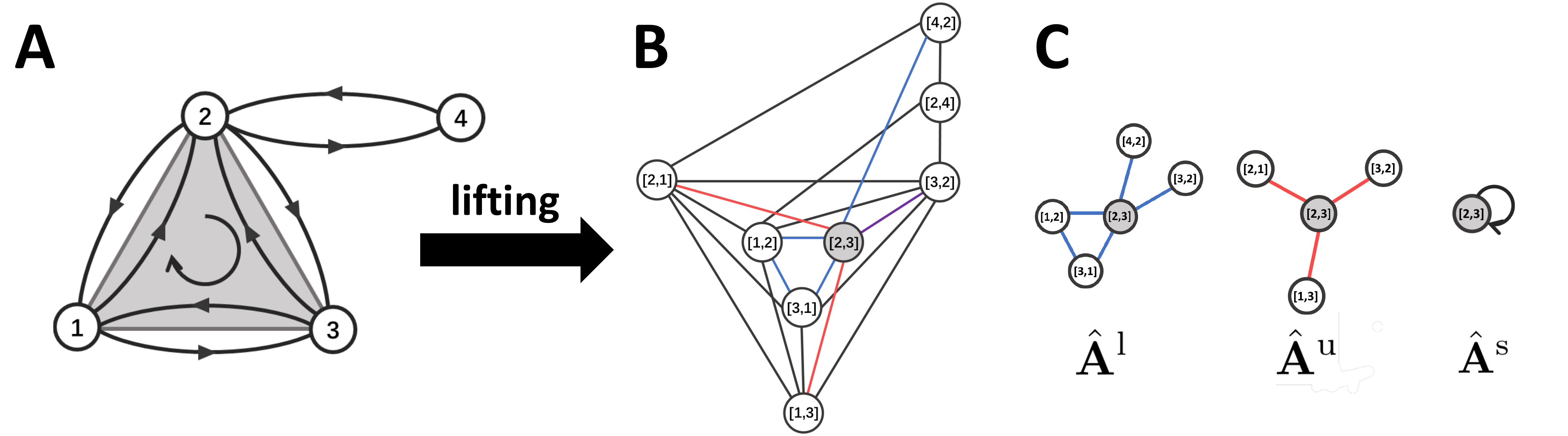}
    \caption{Lifting of a simplicial complex with simplices up to dimension $2$. \textbf{A}: a simplicial complex where each link is represented in both possible orientations. The shaded area represents a 2-simplex. \textbf{B}: The lifted line graph $\G$ of the simplcial complex (self-loops are omitted for clarity). We can interpret each link $(a,b)$ in the original complex as giving rise to two states $[a,b]$ and $[b,a]$ on $\G$ with $2n_1$ nodes. \textbf{C}: $\G$ has three types of connections, represented by $\Al$, $\Au$ and $\As$, respectively. Here we show the egoview of the grey node $[2,3]$ in terms of the three types of connections.}
    \label{fig:lifting}
\end{figure*}

\subsection{Modularity Optimization Using the Louvain Method}\label{Louvain}
Having defined $\hat{\textbf{P}}$ and $\A$, we plug $\A$ into the definition of modularity~(\ref{eq:mod}) and (\ref{eq:stability}) to find the optimal link partitions in terms of modularity in $\G$ with varying resolution. In practice, such optimal partitions can be found by standard modularity optimization algorithms. One of the most popular and state-of-the-art modularity optimization method is the Louvain method~\cite{Blondel2008FastUO}. Since we include two orientations of each link as nodes in the lifted line graph $\hat{\mathcal{G}}$, there is a possibility that applying a modularity optimization method directly to $\hat{\mathcal{G}}$ could break the two orientations of a same link into different groups. Such a partitioning result would be hard to interpret. However, we show that such a problem generally does not exist for the Louvain method. 

The Louvain algorithm has two phases: partitioning and reconstructing. In the first phase, one starts by putting each node in the graph into a distinct
community. For each node $i$, the algorithm performs two
calculations: (i). Compute the modularity change $\Delta Q$ when putting node $i$ into the community of some neighbor $j$; (ii). Move $i$ to the community of node $j$ that yields the
largest gain in $\Delta Q$. This process repeats until step 2 can no longer improve the modularity, i.e. when a local maximum of the modularity is attained. Then we enter the second phase, where we collapse the existing communities to single nodes and build a new network where nodes are the communities from the previous phase, and links weights are aggregated accordingly. Once the new network is created, the second phase ends and the first phase can be re-applied to the new network.

We now formally state the previous claim that the Louvain method applied to $\hat{G}$ will always group the two orientations of each link into the same community:
\begin{theorem}
Given the lifted line graph $\hat{\mathcal{G}}$ with adjacency matrix $\hat{A}$, let $2\hat{m} = \sum_{i,j}\hat{A}_{ij}$. Let $\hat{k}_i =  \sum_{j}\hat{A}_{ij}$ be the sum of
the weights of the connections attached to the oriented link $i$. Suppose that for every node in $\hat{\mathcal{G}}$, $\hat{k}_i \leq \sqrt{2\hat{m}}$ $(\ast)$. In addition, assume in the first phrase, we iterate through links with the chosen orientations $\{e_0, e_1,...,e_N\}$ first, and then iterate through the links with the reverse orientations $\{\bar{e}_0,\bar{e}_
1,...,\bar{e}_N\}$. Then the Louvain method $\mathcal{A}$ will always cluster two orientations of a link $e$ and $\bar{e}$ into the same community. i.e. $\mathcal{A}(e) =\mathcal{A}(\bar{e}) $.
\label{thm:main}
\end{theorem}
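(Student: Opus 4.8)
The plan is to exploit the reflection symmetry of $\hat{\mathcal{G}}$ made explicit by Proposition~\ref{prop:block}, combined with a modularity-gain bound that the hypothesis $(\ast)$ supplies. First I would introduce the involution $\sigma$ on the $2n_1$ nodes of $\hat{\mathcal{G}}$ that swaps each chosen orientation $e_i$ with its reverse $\bar{e}_i$. The block structure of $\hat{A}$ in Proposition~\ref{prop:block} shows at once that $\sigma$ is a graph automorphism, $\hat{A}_{\sigma(a)\sigma(b)}=\hat{A}_{ab}$; consequently every oriented link and its reverse carry the same weighted degree, $\hat{k}_{e_i}=\hat{k}_{\bar{e}_i}$, and $2\hat{m}$ is fixed by $\sigma$. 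This symmetry is what lets me transport any statement about $e_i$ into one about $\bar{e}_i$.

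Next I would record two quantitative facts. First, from the explicit formulas for $\hat{A}^l$ and $\hat{A}^u$ one computes that $e_i$ and $\bar{e}_i$ are always directly joined in $\hat{\mathcal{G}}$ with weight $\hat{A}_{e_i\bar{e}_i}\ge 2$: the lower-walk term alone contributes $2$, because the two orientations share both endpoints. Second, condition $(\ast)$, $\hat{k}_i\le\sqrt{2\hat{m}}$, is exactly what forces the null-model penalty of any single pair to satisfy $\hat{k}_a\hat{k}_b/(2\hat{m})\le 1$. Since every present edge of $\hat{\mathcal{G}}$ has integer weight at least $1$, merging two singletons joined by an edge never decreases modularity, and merging $e_i$ with $\bar{e}_i$ strictly increases it, as its gain is proportional to $\hat{A}_{e_i\bar{e}_i}-\hat{k}_{e_i}^2/(2\hat{m})\ge 2-1=1>0$.

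With these in hand I would argue that the two orientations never end up separated. The cleanest diagnostic is an exchange argument: writing the modularity change of moving $e_i$ into $\bar{e}_i$'s community and of moving $\bar{e}_i$ into $e_i$'s community, adding the two expressions, and using $\hat{k}_{e_i}=\hat{k}_{\bar{e}_i}$ together with $\sigma$, all null-model cross terms cancel and one is left with a guaranteed positive core $2\bigl(\hat{A}_{e_i\bar{e}_i}-\hat{k}_{e_i}^2/(2\hat{m})\bigr)\ge 2$ plus a residual of the $\sigma$-odd quantity $f(b)=\hat{A}_{e_i b}-\hat{A}_{\bar{e}_i b}$ summed over the two communities. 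The main obstacle is precisely this residual: it need not vanish for an arbitrary pair of communities, so bare local-optimality of the phase-one partition is not by itself enough, and the prescribed iteration order must be used. I would therefore set up an induction over the reverse links $\bar{e}_0,\dots,\bar{e}_N$: assuming $\bar{e}_0,\dots,\bar{e}_{i-1}$ have already been placed in the same communities as $e_0,\dots,e_{i-1}$, I would use $\sigma$ to put the candidate communities available to $\bar{e}_i$ into a weight-preserving bijection with those $e_i$ faced, so that the residual is controlled and the strictly positive pull of the direct $e_i$--$\bar{e}_i$ bond makes $e_i$'s community the unique maximizer of the gain for $\bar{e}_i$. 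Establishing this bijection cleanly, and verifying that processing all chosen orientations before any reverse one prevents a reference link from being captured by a reverse singleton during the first sweep, is the technical heart of the proof.

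Finally I would check that the pairing, once created in phase one, is never destroyed. Phase two only contracts entire communities into single nodes, and phase one only ever moves a node into an existing community, so once $e_i$ and $\bar{e}_i$ share a community they are contracted together and no later operation can separate them. Hence $\mathcal{A}(e)=\mathcal{A}(\bar{e})$ for every link, as claimed.
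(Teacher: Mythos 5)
Your proposal assembles the right quantitative ingredients, and they are exactly the ones the paper uses: the direct bond $\hat{A}_{e\bar{e}} = 2+\deg(e) \ge 2$ coming from the two shared endpoints, the bound $\hat{k}_a\hat{k}_b/(2\hat{m})\le 1$ supplied by $(\ast)$, and the degree symmetry $\hat{k}_e=\hat{k}_{\bar{e}}$ from Proposition~\ref{prop:block}. However, what you yourself call the ``technical heart'' is precisely the part you do not supply, and the structure you choose for it makes it harder than it needs to be. You induct over the \emph{reverse} sweep $\bar{e}_0,\dots,\bar{e}_N$, implicitly allowing the chosen orientations to have landed in arbitrary communities first, and then try to control the residual $\sigma$-odd term by a ``weight-preserving bijection'' between the candidate communities seen by $\bar{e}_i$ and those seen by $e_i$. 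Such a bijection would have to relate the partition state at two different moments of phase one, and that state is not preserved between them; nothing in your outline pins down where $e_i$ actually went. The paper's proof avoids this entirely by showing that the pairing is already forced during the first sweep over the chosen orientations: starting from all singletons, when $e_i$ is processed the only non-singleton communities in existence are the previously formed pairs $\{e_j,\bar{e}_j\}$ with $j<i$, and one checks directly that $\Delta Q(e_i\to\bar{e}_i)\ge\Delta Q(e_i\to e')$ for every singleton neighbor $e'$ (since $\hat{K}_{e_i,\bar{e}_i}=2(2+\deg(e_i))$ exceeds any competing $\hat{K}_{e_i,e'}$ by at least $2$, while $(\ast)$ bounds the null-model difference by $2\hat{m}$) and likewise $\Delta Q(e_i\to\bar{e}_i)\ge\Delta Q(e_i\to\{e_j,\bar{e}_j\})$ using $\hat{K}_{e_i,\{e_j,\bar{e}_j\}}\le 2(1+\deg(e_i))$. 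By the time the reverse sweep begins there is nothing left to decide.

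A second genuine error is in your closing paragraph: it is not true that ``phase one only ever moves a node into an existing community, so once $e_i$ and $\bar{e}_i$ share a community \ldots no later operation can separate them.'' Phase one of Louvain repeatedly revisits nodes and may remove a node from a community it joined earlier; a pairing created mid-sweep is not automatically permanent. To conclude, you must show that the all-pairs partition $\{e_0,\bar{e}_0\},\dots,\{e_N,\bar{e}_N\}$ is a local optimum under single-node moves, so that phase one terminates there and only then does phase two contract each pair into a supernode. The paper asserts this stability via the $e\leftrightarrow\bar{e}$ symmetry; your argument needs an analogous step before the contraction claim can be invoked.
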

The proof is in Appendix~\ref{app:pf}. Here $(\ast)$ is a sufficient but not necessary condition for the
theorem. We verified that the networks in our experiments
(Section~\ref{experiments}) indeed satisfy $(\ast)$  and thus Theorem 2 applies. One
can deliberately construct counterexamples that violate
$(\ast)$. However, this condition generally holds for large realistic
networks.

\begin{corollary}
Let $S(\hat{\mathcal{G}})$ be the line graph by considering the two orientations of the same link in $\hat{\mathcal{G}}$ as one supernode, and then aggregate the link weights accordingly. Then $S(\hat{\mathcal{G}})$ has adjacency matrix $\hat{A}_I+\hat{A}_{II}$. Under the assumption $(\ast)$, optimizing modularity on $S(\hat{\mathcal{G}})$ using the Louvain method is equivalent to optimizing modularity on $\hat{\mathcal{G}}$ using the Louvain method.
\label{cor:main}
\end{corollary}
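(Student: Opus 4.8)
The plan is to reduce the corollary to two facts: (i) the adjacency matrix of $S(\hat{\mathcal{G}})$ is (a scalar multiple of) $\hat{A}_I + \hat{A}_{II}$, which follows purely from the block structure of Proposition~\ref{prop:block}; and (ii) modularity is invariant under contracting the two orientations of each link, so that the modularity landscape of $S(\hat{\mathcal{G}})$ is isomorphic to the restriction of the modularity landscape of $\hat{\mathcal{G}}$ to those partitions that keep every pair $\{e,\bar{e}\}$ together. Theorem~\ref{thm:main} then guarantees that the Louvain method on $\hat{\mathcal{G}}$ never leaves that restricted set, so the two runs must coincide.

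First I would compute the aggregated weights. Writing each supernode as a pair $\{e_i,\bar{e}_i\}$ and using the block form $\begin{bmatrix}\hat{A}_I & \hat{A}_{II}\\ \hat{A}_{II} & \hat{A}_I\end{bmatrix}$ from Proposition~\ref{prop:block}, the weight between supernodes $i$ and $j$ is the sum of the four oriented connections $\hat{A}_{e_i e_j}+\hat{A}_{e_i \bar{e}_j}+\hat{A}_{\bar{e}_i e_j}+\hat{A}_{\bar{e}_i \bar{e}_j} = 2\big[(\hat{A}_I)_{ij}+(\hat{A}_{II})_{ij}\big]$. Hence $S(\hat{\mathcal{G}})$ has adjacency matrix proportional to $\hat{A}_I+\hat{A}_{II}$; the overall factor of $2$ (and the bookkeeping of the $e_i$--$\bar{e}_i$ edge as a supernode self-loop) is immaterial because modularity is invariant under a global rescaling of all weights. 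The same block symmetry gives $\hat{k}_{e_i}=\hat{k}_{\bar{e}_i}$, so the supernode degree equals $\hat{k}_{e_i}+\hat{k}_{\bar{e}_i}$ and the total weight is preserved, $2\tilde{m}=2\hat{m}$.

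Next I would establish the modularity identity. For any partition $\mathcal{P}$ of $\hat{\mathcal{G}}$ that keeps each pair together, let $\bar{\mathcal{P}}$ be the induced partition of supernodes. Inside a community $C=\bigcup_{i\in\bar{C}}\{e_i,\bar{e}_i\}$, the adjacency term telescopes exactly into the aggregated weights, $\sum_{a,b\in C}\hat{A}_{ab}=\sum_{i,j\in\bar{C}}\tilde{A}_{ij}$, while the null-model term is a square, $\sum_{a,b\in C}\hat{k}_a\hat{k}_b=\big(\sum_{a\in C}\hat{k}_a\big)^2=\big(\sum_{i\in\bar{C}}\tilde{k}_i\big)^2=\sum_{i,j\in\bar{C}}\tilde{k}_i\tilde{k}_j$, using $\sum_{a\in C}\hat{k}_a=\sum_{i\in\bar{C}}(\hat{k}_{e_i}+\hat{k}_{\bar{e}_i})$. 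Combined with $2\hat{m}=2\tilde{m}$ this yields $Q_{\hat{\mathcal{G}}}(\mathcal{P})=Q_{S(\hat{\mathcal{G}})}(\bar{\mathcal{P}})$, and the same computation for the resolution-parametrized stability~(\ref{eq:stability}) transfers verbatim. Thus $\mathcal{P}\leftrightarrow\bar{\mathcal{P}}$ is a modularity-preserving bijection between pairing-respecting partitions of $\hat{\mathcal{G}}$ and all partitions of $S(\hat{\mathcal{G}})$.

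Finally I would lift the equivalence from objective values to the algorithm itself. By Theorem~\ref{thm:main}, under $(\ast)$ and the stated iteration order the Louvain output on $\hat{\mathcal{G}}$ is always pairing-respecting, so it corresponds under the bijection to a partition of $S(\hat{\mathcal{G}})$. I would then argue that the greedy moves correspond: relocating a supernode in $S(\hat{\mathcal{G}})$ is exactly relocating both $e_i$ and $\bar{e}_i$ together in $\hat{\mathcal{G}}$, and by the modularity identity the two resulting gains $\Delta Q$ are equal, so a local optimum of one is a local optimum of the other and the phase-one/phase-two recursion proceeds in lockstep. The main obstacle I anticipate is precisely this last step: making the comparison at the level of the algorithm rather than the global optimum, since Louvain returns only a local maximum. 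The delicate point is to show that the greedy ascent on $\hat{\mathcal{G}}$, which moves $e_i$ and $\bar{e}_i$ one at a time in the prescribed order, realizes the same sequence of community assignments as the single-supernode moves on $S(\hat{\mathcal{G}})$ and cannot gain modularity through any intermediate move that separates a pair---which is exactly what Theorem~\ref{thm:main} rules out.
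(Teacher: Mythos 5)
Your proposal is correct and uses exactly the two ingredients the paper's own one-sentence proof cites: Proposition~\ref{prop:block} for the block structure that yields the aggregated adjacency $\hat{A}_I+\hat{A}_{II}$, and Theorem~\ref{thm:main} to guarantee that the Louvain run on $\hat{\mathcal{G}}$ never separates a pair $\{e,\bar{e}\}$, after which Louvain's own aggregation phase literally produces $S(\hat{\mathcal{G}})$ and the two runs coincide. The extra details you supply---the factor of $2$ in the aggregated weights being absorbed by scale-invariance of modularity, and the explicit modularity-preserving bijection between pairing-respecting partitions of $\hat{\mathcal{G}}$ and partitions of $S(\hat{\mathcal{G}})$---are exactly what the paper leaves implicit.
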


\subsection{Spectral Properties of the Stochastic Lifting Matrix $\hat{\textbf{P}}$}\label{Phat}
Our assumption about the graph skeleton of $\mathcal{X}$ being undirected, connected and non-bipartite implies that $\G$ is also undirected, connected and non-bipartite. Hence the random walk on $\G$ defined by $\hat{\textbf{P}}$ is reversible~\cite{aldous-fill-2014}. The reversibility implies that $\hat{\textbf{P}}$, while not always being symmetric, is diagonalizable and has a real spectrum. We state the following interesting spectral properties about $\hat{\textbf{P}}$.
\begin{proposition}
$\hat{\textbf{P}}$ has the following block form:
\begin{equation*}
    \begin{bmatrix}
\hat{P}_I & \hat{P}_{II} \\
\hat{P}_{II} & \hat{P}_{I}
\end{bmatrix}
\end{equation*}
\label{prop:pblock}
\end{proposition}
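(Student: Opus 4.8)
The plan is to realize $\hat{\textbf{P}}$ explicitly as a graph random walk matrix and then read off its block structure from that of its adjacency matrix. Since Theorem~\ref{thm:lifting} identifies $\hat{\textbf{P}}$ with the random walk on $\G$ governed by the transition rule $T_{ij} = \A_{ij}/\hat{k}_j$ from Section~\ref{MODULARITY}, I would first write $\hat{\textbf{P}} = \A \hat{D}^{-1}$, where $\hat{D}$ is the diagonal matrix of degrees $\hat{D}_{ii} = \hat{k}_i = \sum_j \A_{ij}$. By Proposition~\ref{prop:block} the adjacency matrix already has the form $\A = \begin{bmatrix} \Aone & \Atwo \\ \Atwo & \Aone \end{bmatrix}$, so everything reduces to controlling the normalizing factor $\hat{D}^{-1}$.

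The key step is to show that $\hat{D}$ inherits the orientation-swap symmetry, i.e. that $\hat{D} = \begin{bmatrix} D' & 0 \\ 0 & D' \end{bmatrix}$ for a single diagonal block $D'$ common to both orientation classes. This follows from the observation that the two states $[i,j]$ and $[j,i]$ associated with the same link have equal degree: by the degree computation underlying Theorem~\ref{thm:lifting}, $\hat{k}_{[i,j]} = d_{[i,j]} = 2(\deg(i)+\deg(j)+3\deg(i,j))$, which is manifestly symmetric under interchanging $i$ and $j$. Equivalently, since the top and bottom block rows of $\A$ are obtained from one another by swapping $\Aone$ and $\Atwo$, summing the entries of a row in the top block gives the same total as summing the corresponding row in the bottom block, so the $i$-th diagonal entry of $\hat{D}$ in each block agrees.

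Given these two facts, the result is immediate: multiplying the block form of $\A$ by the block-diagonal $\hat{D}^{-1}$ yields
\[
\hat{\textbf{P}} = \A\hat{D}^{-1} = \begin{bmatrix} \Aone (D')^{-1} & \Atwo (D')^{-1} \\ \Atwo (D')^{-1} & \Aone (D')^{-1} \end{bmatrix},
\]
so the claimed block form holds with $\Pone = \Aone (D')^{-1}$ and $\Ptwo = \Atwo (D')^{-1}$. A clean way to package both steps at once is to introduce the swap permutation $W = \begin{bmatrix} 0 & I \\ I & 0 \end{bmatrix}$, which satisfies $W^2 = I$; the matrices of the asserted block form are exactly those commuting with $W$, and since both $\A$ (Proposition~\ref{prop:block}) and the diagonal $\hat{D}$ commute with $W$, so does $\hat{\textbf{P}} = \A\hat{D}^{-1}$.

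The only genuine subtlety, and the step I would be most careful about, is pinning down the normalization convention so that $\hat{D}$ really is symmetric under the swap. Concretely one must confirm that the self-loop weights in $\As$ and the off-diagonal weights in $\Al + \Au$ both respect the $i \leftrightarrow j$ symmetry of the degree formula, and that the walk is normalized on the correct side (whether $\hat{\textbf{P}} = \A\hat{D}^{-1}$ or $\hat{D}^{-1}\A$). Either normalization preserves the block form, since $\hat{D}^{-1}$ commutes with $W$ regardless, so the conclusion is robust; the argument merely requires that the degrees of the two orientations of each link coincide, which the explicit degree formula guarantees.
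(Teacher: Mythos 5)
Your proof is correct and follows essentially the same route as the paper, which simply observes that the block form of $\hat{\textbf{P}}$ follows directly from the block form of $\hat{A}$ established in Proposition~\ref{prop:block}. You additionally make explicit the one detail the paper leaves implicit --- that the degree normalization $\hat{D}^{-1}$ is block-diagonal with identical blocks because the two orientations $[i,j]$ and $[j,i]$ have equal degree $2(\deg(i)+\deg(j)+3\deg(i,j))$ --- which is exactly the right point to be careful about and matches the normalization $\hat{\textbf{P}} = \hat{A}\,\mathrm{diag}(\hat{A}\mathbf{1})^{-1}$ used in the paper's proof of Theorem~\ref{thm:lifting}.
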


\begin{proposition}
Let $\Spec(M)$ denote the spectrum of a matrix $M$. Then
$\Spec(\hat{\textbf{P}}) = \Spec(\hat{P}_I + \hat{P}_{II})\cup \Spec(\hat{P}_I - \hat{P}_{II})$.
\label{prop:decom}
\end{proposition}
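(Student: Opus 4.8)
The plan is to diagonalize the block structure by passing to an ``even/odd'' basis, exploiting the fact that $\hat{\textbf{P}}$ commutes with the orientation-swap involution. Let $J = \begin{bmatrix} 0 & I \\ I & 0 \end{bmatrix}$ be the permutation that exchanges the two orientation blocks (here $I = I_{n_1}$). The block form established in Proposition~\ref{prop:pblock} says precisely that $\hat{\textbf{P}} J = J \hat{\textbf{P}}$, so the $\pm 1$ eigenspaces of $J$ — the symmetric vectors $(x,x)$ and the antisymmetric vectors $(x,-x)$ — are each invariant under $\hat{\textbf{P}}$. Decomposing $\mathbb{R}^{2n_1}$ into these two complementary invariant subspaces is what splits the spectrum into two pieces, one per subspace.

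To make this concrete, I would conjugate $\hat{\textbf{P}}$ by the orthogonal involution $U = \tfrac{1}{\sqrt{2}}\begin{bmatrix} I & I \\ I & -I \end{bmatrix}$, which satisfies $U = U^\top = U^{-1}$ (symmetry is immediate from the blocks, and $U^2 = I$ follows from $V^\top V = 2I$-type bookkeeping). A direct block multiplication, which does \emph{not} require $\hat{P}_I$ and $\hat{P}_{II}$ to commute, gives
\begin{equation*}
U \hat{\textbf{P}} U^{-1} = \begin{bmatrix} \hat{P}_I + \hat{P}_{II} & 0 \\ 0 & \hat{P}_I - \hat{P}_{II} \end{bmatrix}.
\end{equation*}
This is the one routine computation in the argument; the columns of $U$ are exactly the (normalized) symmetric and antisymmetric combinations, so the ``$+$'' block acts on $(x,x)$ and the ``$-$'' block acts on $(x,-x)$.

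To finish, I would invoke that similar matrices share the same characteristic polynomial, together with the factorization of the characteristic polynomial of a block-diagonal matrix, to obtain $\det(\lambda I - \hat{\textbf{P}}) = \det\bigl(\lambda I - (\hat{P}_I + \hat{P}_{II})\bigr)\det\bigl(\lambda I - (\hat{P}_I - \hat{P}_{II})\bigr)$, whence $\Spec(\hat{\textbf{P}}) = \Spec(\hat{P}_I + \hat{P}_{II}) \cup \Spec(\hat{P}_I - \hat{P}_{II})$ as multisets.

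I do not anticipate a serious obstacle: the content is the standard block-circulant diagonalization. The only points requiring care are the sign bookkeeping in the conjugation, so that the ``$+$'' block is genuinely $\hat{P}_I + \hat{P}_{II}$ and not $\hat{P}_I - \hat{P}_{II}$, and reading $\cup$ with multiplicity. It is worth flagging the conceptual payoff, which I would state after the proof rather than inside it: the antisymmetric subspace $\{(x,-x)\}$ is exactly the range of the lift $V$, so $\Spec(\hat{P}_I - \hat{P}_{II})$ is the ``odd'' part tied to the Hodge Laplacian and hence to the homology of $\X$, while $\Spec(\hat{P}_I + \hat{P}_{II})$ is the ``even'' part supported on the symmetric subspace, which is precisely the random walk on the line graph $S(\hat{\mathcal{G}})$ with adjacency $\hat{A}_I + \hat{A}_{II}$ from Corollary~\ref{cor:main}.
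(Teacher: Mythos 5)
Your proposal is correct and is essentially the paper's argument: the paper's proof simply asserts the determinant factorization $\det\bigl(\begin{smallmatrix}\hat{P}_I-\lambda I & \hat{P}_{II}\\ \hat{P}_{II} & \hat{P}_I-\lambda I\end{smallmatrix}\bigr)=\det(\hat{P}_I+\hat{P}_{II}-\lambda I)\det(\hat{P}_I-\hat{P}_{II}-\lambda I)$, and your conjugation by $U=\tfrac{1}{\sqrt{2}}\bigl(\begin{smallmatrix}I & I\\ I & -I\end{smallmatrix}\bigr)$ is exactly the standard justification of that identity. The only difference is that you make the even/odd invariant-subspace decomposition explicit, which the paper leaves implicit.
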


\begin{proposition}
If $x$ is an eigenvector for $\hat{P}_I + \hat{P}_{II}$ with eigenvalue $\lambda$, then $[x,x]^\top$ is an eigenvector for $\hat{\textbf{P}}$ with eigenvalue $\lambda$.
\label{prop:even}
\end{proposition}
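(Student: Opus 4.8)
The plan is to reduce the statement to a single block matrix–vector multiplication, leaning entirely on the block form of $\hat{\textbf{P}}$ already recorded in Proposition~\ref{prop:pblock}. That result gives
\begin{equation*}
\hat{\textbf{P}} = \begin{bmatrix} \hat{P}_I & \hat{P}_{II} \\ \hat{P}_{II} & \hat{P}_I \end{bmatrix},
\end{equation*}
so the proposed eigenvector $[x,x]^\top$ is built precisely to exploit the symmetry between the diagonal blocks $\hat{P}_I$ and the off-diagonal blocks $\hat{P}_{II}$. I would begin by fixing $x \neq 0$ with $(\hat{P}_I + \hat{P}_{II})x = \lambda x$ and writing down $[x,x]^\top$ as the candidate; note it is nonzero whenever $x$ is, so it is a legitimate eigenvector and not merely a vector in the kernel of $\hat{\textbf{P}} - \lambda I$ by degeneracy.

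The key step is the computation itself. Multiplying through in block form,
\begin{equation*}
\hat{\textbf{P}} \begin{bmatrix} x \\ x \end{bmatrix}
= \begin{bmatrix} \hat{P}_I x + \hat{P}_{II} x \\ \hat{P}_{II} x + \hat{P}_I x \end{bmatrix}
= \begin{bmatrix} (\hat{P}_I + \hat{P}_{II})x \\ (\hat{P}_I + \hat{P}_{II})x \end{bmatrix}
= \lambda \begin{bmatrix} x \\ x \end{bmatrix},
\end{equation*}
where the first equality is the block rule, the second collects the repeated blocks using that both rows act on the same $x$, and the third applies the eigenvalue hypothesis to each identical entry. This establishes that $[x,x]^\top$ is an eigenvector of $\hat{\textbf{P}}$ with eigenvalue $\lambda$, which is exactly the claim.

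There is essentially no obstacle here beyond correctly invoking Proposition~\ref{prop:pblock}: the content of the statement lives in the block structure, and once that is granted the argument is purely mechanical. The only thing worth flagging is the conceptual point the computation makes precise, namely that vectors of the form $[x,x]^\top$ — the ones symmetric under swapping the two orientation blocks — constitute the ``even'' part of the spectrum alluded to in the introduction, with $\hat{P}_I + \hat{P}_{II}$ acting as the effective operator on that symmetric subspace. This dovetails with the decomposition $\Spec(\hat{\textbf{P}}) = \Spec(\hat{P}_I + \hat{P}_{II}) \cup \Spec(\hat{P}_I - \hat{P}_{II})$ from Proposition~\ref{prop:decom}, so I would close by remarking that the present proposition identifies which half of that union the symmetric eigenvectors realize.
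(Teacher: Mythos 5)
Your proof is correct and uses exactly the same argument as the paper: invoke the block form of $\hat{\textbf{P}}$ from Proposition~\ref{prop:pblock} and verify the eigenvalue equation by a direct block matrix--vector multiplication. The extra remarks about nonvanishing of $[x,x]^\top$ and the link to Proposition~\ref{prop:decom} are sound but not needed.
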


\begin{corollary}
$1$ is an eigenvalue for $\hat{P}_I + \hat{P}_{II}$, and suppose its corresponding eigenvector is $x$. Then $1$ is an eigenvalue for $\hat{\textbf{P}}$ with eigenvector $[x,x]^\top$.
\label{cor:even}
\end{corollary}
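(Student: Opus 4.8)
The plan is to read this corollary as an essentially immediate specialization of Proposition~\ref{prop:even} to the eigenvalue $\lambda = 1$. Granting that Proposition, the only genuine content left to supply is the verification that $1$ really is an eigenvalue of $\hat{P}_I + \hat{P}_{II}$ in the first place; everything else is a direct quotation of a statement already proved. So I would structure the argument as: (i) exhibit $1$ as an eigenvalue of $\hat{P}_I + \hat{P}_{II}$, then (ii) feed that fact into Proposition~\ref{prop:even}.

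For step (i), the key observation is that Theorem~\ref{thm:lifting} guarantees $\hat{\textbf{P}}$ is a stochastic matrix, so each of its rows sums to one and the all-ones vector $\mathbf{1} \in \mathbb{R}^{2n_1}$ is a right eigenvector: $\hat{\textbf{P}}\mathbf{1} = \mathbf{1}$. I would then write $\mathbf{1}$ in the block form $\mathbf{1} = [\mathbf{1}_{n_1}, \mathbf{1}_{n_1}]^\top$ matching the $2 \times 2$ block structure of $\hat{\textbf{P}}$ recorded in Proposition~\ref{prop:pblock}. Reading off the top block of the identity $\hat{\textbf{P}}\mathbf{1} = \mathbf{1}$ then gives $(\hat{P}_I + \hat{P}_{II})\mathbf{1}_{n_1} = \mathbf{1}_{n_1}$ (the bottom block yields the same equation by the symmetry of the block layout). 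This exhibits $1 \in \Spec(\hat{P}_I + \hat{P}_{II})$ with witnessing eigenvector $\mathbf{1}_{n_1}$, which completes the only substantive step.

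Step (ii) is then a verbatim appeal to Proposition~\ref{prop:even}: for any eigenvector $x$ of $\hat{P}_I + \hat{P}_{II}$ with eigenvalue $\lambda = 1$, the lifted vector $[x,x]^\top$ is an eigenvector of $\hat{\textbf{P}}$ with the same eigenvalue $1$. This delivers both assertions of the corollary simultaneously. I do not expect any serious obstacle; the one point deserving care is confirming that $1$ lands in $\Spec(\hat{P}_I + \hat{P}_{II})$ rather than in $\Spec(\hat{P}_I - \hat{P}_{II})$ (the two halves of the spectrum isolated in Proposition~\ref{prop:decom}). The explicit computation with the constant vector $\mathbf{1}_{n_1}$ pins it to the former, consistent with the fact that the Perron eigenvector of a stochastic matrix is constant and therefore of the symmetric ``even'' form $[x,x]^\top$. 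If one further wished to assert that this eigenvalue is simple, one would invoke the connectedness and non-bipartiteness of $\G$ noted just before Proposition~\ref{prop:pblock}, but this is not required for the stated claim.
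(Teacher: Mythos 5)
Your overall architecture is the same as the paper's: first certify that $1 \in \Spec(\hat{P}_I + \hat{P}_{II})$, then invoke Proposition~\ref{prop:even} with $\lambda = 1$. The paper handles the first step by observing that $\hat{P}_I + \hat{P}_{II}$ is itself the stochastic transition matrix of the random walk on the supernode graph $S(\hat{\mathcal{G}})$ with adjacency matrix $\hat{A}_I + \hat{A}_{II}$, so $1$ is automatically in its spectrum.

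However, your verification of step (i) contains a genuine error. The paper's $\hat{\textbf{P}} = \hat{A}\operatorname{diag}(\hat{A}\mathbf{1})^{-1} = \hat{A}\hat{W}^{-1}$ is \emph{column}-stochastic (entries $\hat{A}_{ij}/\hat{k}_j$, columns summing to one), consistent with the convention $p_{t+1} = \hat{\textbf{P}}p_t$ used throughout the paper. Its rows do not sum to one unless $\hat{\mathcal{G}}$ is regular, so the identity $\hat{\textbf{P}}\mathbf{1} = \mathbf{1}$ is false in general and the block equation $(\hat{P}_I + \hat{P}_{II})\mathbf{1}_{n_1} = \mathbf{1}_{n_1}$ does not follow. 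The repair is cheap: either pass to the left eigenvector --- $\mathbf{1}^\top\hat{\textbf{P}} = \mathbf{1}^\top$ yields $\mathbf{1}_{n_1}^\top(\hat{P}_I+\hat{P}_{II}) = \mathbf{1}_{n_1}^\top$, hence $1 \in \Spec\bigl((\hat{P}_I+\hat{P}_{II})^\top\bigr) = \Spec(\hat{P}_I+\hat{P}_{II})$ --- or argue as the paper does that $\hat{P}_I+\hat{P}_{II}$ is itself column-stochastic. But then the witnessing right eigenvector $x$ is the stationary distribution of the supernode walk (proportional to degrees in $S(\hat{\mathcal{G}})$), not the constant vector; your closing remark that the Perron eigenvector of a stochastic matrix is constant holds only in the row-stochastic convention. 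This misidentification matters downstream: the proof of Corollary~\ref{cor:stationary} explicitly takes the $x$ of this corollary to be the stationary solution of $\hat{P}_I + \hat{P}_{II}$, which is not uniform in general.
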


\begin{corollary}
Let the stationary solution of $\hat{\textbf{P}}$ be $\hat{\pi}$. Then $V^\top \hat{\pi} = 0$.
\label{cor:stationary}
\end{corollary}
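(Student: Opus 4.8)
The plan is to feed the stationarity relation $\hat{\textbf{P}}\hat{\pi}=\hat{\pi}$ through the lifting identity of Theorem~\ref{thm:lifting} and then appeal to the non-negativity of $\Spec(\mathcal{L}_1)$. The assumption that the skeleton of $\X$ is connected and non-bipartite makes $\G$ connected and non-bipartite, so $\hat{\textbf{P}}$ is irreducible and aperiodic and admits a unique stationary distribution $\hat{\pi}$ with $\hat{\textbf{P}}\hat{\pi}=\hat{\pi}$, using the column-stochastic convention $T_{ij}=A_{ij}/k_j$ of Section~\ref{MODULARITY}. The only inputs I need are this fixed point and the defining identity $V^\top\hat{\textbf{P}}=\tfrac{1}{2}(I-\mathcal{L}_1)V^\top$ of Theorem~\ref{thm:lifting}.

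First I would left-multiply the stationarity relation by $V^\top$ and substitute the lifting identity. Setting $y:=V^\top\hat{\pi}\in\mathbb{R}^{n_1}$, this gives $y=V^\top\hat{\pi}=V^\top\hat{\textbf{P}}\hat{\pi}=\tfrac{1}{2}(I-\mathcal{L}_1)V^\top\hat{\pi}=\tfrac{1}{2}(I-\mathcal{L}_1)y$. Rearranging collapses this to $\mathcal{L}_1 y=-y$: thus either $y=0$, which is precisely the desired conclusion $V^\top\hat{\pi}=0$, or $y$ is an eigenvector of $\mathcal{L}_1$ with eigenvalue $-1$.

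The remaining task, and the only place with genuine content, is to exclude the eigenvalue $-1$. Although $\mathcal{L}_1=[B_1^\top B_1+B_2B_2^\top]D_{tot}^{-1}$ is not symmetric, conjugating by the positive diagonal matrix $D_{tot}^{1/2}$ shows it is similar to $D_{tot}^{-1/2}[B_1^\top B_1+B_2B_2^\top]D_{tot}^{-1/2}$, which equals $(B_1D_{tot}^{-1/2})^\top(B_1D_{tot}^{-1/2})+(D_{tot}^{-1/2}B_2)(D_{tot}^{-1/2}B_2)^\top$, a sum of two Gram matrices and hence positive semidefinite. Similar matrices share spectra, so $\Spec(\mathcal{L}_1)\subseteq[0,\infty)$; in particular $-1\notin\Spec(\mathcal{L}_1)$, the second alternative is impossible, and $y=V^\top\hat{\pi}=0$.

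As an independent sanity check one can avoid $\mathcal{L}_1$ entirely: all three blocks $\As$, $\Al$, $\Au$ are invariant under swapping the two orientations of a link, so by Proposition~\ref{prop:block} the column sums of $\A$ have identical upper and lower halves; since $\hat{\pi}$ is proportional to these column sums, its two halves agree and $V^\top=[\,I_{n_1}\;\;-I_{n_1}\,]$ annihilates it. I expect the main obstacle to be purely the spectral step—certifying $\Spec(\mathcal{L}_1)\subseteq[0,\infty)$ from the non-symmetric product $[B_1^\top B_1+B_2B_2^\top]D_{tot}^{-1}$ via the similarity transform—rather than the short algebraic reduction.
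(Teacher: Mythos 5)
Your argument is correct, but it reaches the conclusion by a genuinely different route than the paper. The paper's proof is a one-liner riding on Corollary~\ref{cor:even}: the Perron vector $x$ of $\hat{P}_I+\hat{P}_{II}$ lifts to the eigenvector $[x,x]^\top$ of $\hat{\textbf{P}}$ at eigenvalue $1$, so by uniqueness of the stationary distribution $\hat{\pi}$ has two identical halves and $V^\top=[\,I\;\;-I\,]$ annihilates it — this is exactly your ``sanity check,'' which you correctly identify as following from the block symmetry of Proposition~\ref{prop:block}. Your main route instead pushes $\hat{\textbf{P}}\hat{\pi}=\hat{\pi}$ through the lifting identity of Theorem~\ref{thm:lifting} to get $\mathcal{L}_1 y=-y$ for $y=V^\top\hat{\pi}$, and then rules out the eigenvalue $-1$ by showing $\mathcal{L}_1=L_1D_{tot}^{-1}$ is similar (via the positive diagonal $D_{tot}^{1/2}$, which exists since every link has $\deg(i)+\deg(j)\ge 2$) to a sum of two Gram matrices, hence has nonnegative spectrum. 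Both steps check out, including the algebra $y=\tfrac{1}{2}(I-\mathcal{L}_1)y\Rightarrow\mathcal{L}_1y=-y$ and the column-stochastic convention matching $\hat{\textbf{P}}=\hat{A}\,\mathrm{diag}(\hat{A}\mathbf{1})^{-1}$. What your route buys: it needs neither Perron--Frobenius nor uniqueness of $\hat{\pi}$ (it shows $V^\top$ kills \emph{every} fixed point of $\hat{\textbf{P}}$), and it makes explicit the link between the ``odd'' part of the spectrum and $\Spec(\mathcal{L}_1)\subseteq[0,\infty)$ that the paper only gestures at in Section~\ref{Phat}. What the paper's route buys: it is essentially free given Corollary~\ref{cor:even}, and it exhibits $\hat{\pi}=[\pi\;\pi]^\top$ explicitly rather than only proving the projection vanishes.
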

The above properties of $\hat{\textbf{P}}$ imply that one can think of the random walk on the lifted line graph $\G$ as composed of two parts: the \emph{even} part $\Pone + \Ptwo$ and the \emph{odd} part $\Pone-\Ptwo$. We call the sum ``even'' because it suggests the symmetry between the two orientations in $\G$. As shown in Proposition~\ref{prop:even}, the eigenvalues of  $\hat{\textbf{P}}$ that come from the even part have eigenvectors that put the same value on both orientations for each link. Meanwhile, the difference is named ``odd'' because if one wishes to study a higher-order diffusion dynamic on SCs, one should consider the ``expectation process" defined in ~\cite{Parzanchevski2017SimplicialCS}, or equivalently, the ``lift-propagate-project'' procedure given by the definition of lifting~\eqref{def:lifting}: given link flows defined on $\X$, we first lift them to $\G$ via $V$, then the state is propagated through $\hat{\textbf{P}}$, and finally we project the lifted flows to the space on $\X$ via $\frac{1}{2}V^\top$~\cite{Schaub2020RandomWO}. The difference is exactly what remains after the projection, as the even part will cancel each other during the projection. 

\subsection{Method Summary and Computational Complexity}
Algorithm 1 summarizes the proposed higher-order link partitioning method. Given a simplicial complex $\X$, by definition, its boundary maps $B_1 \in \mathbb{R}^{n_0 \times n_1}$ and $B_2\in\mathbb{R}^{n_1 \times n_2}$ are sparse such that each column only contains $2$ and $3$ non-zero entries, respectively. Hence  computing $B_1$ and $B_2$ requires runtime $O(n_1)$ and $O(n_2)$, respectively and building $\A$ according to the formulas in Theorem~\ref{thm:lifting} runs in time $O(n_1+n_2)$. Moreover, the Louvain method is observed in practice to run in time $O(n_1 \log n_1)$ on average, although the exact computational complexity of the method is unknown~\cite{Blondel2008FastUO}. So the overall computational complexity of Algorithm 1 is $O(n_1 \log n_1 +n_2)$. Compared with running the Louvain method on a graph, the extra term $n_2$ is a trade-off between higher-order information and computational efficiency.
\begin{algorithm}
\caption{Higher-order link partition using $\mathcal{L}_1$ and lifting}
\textbf{Input}: a SC network $\X$, the Louvain method $\mathcal{A}$.\newline
\hspace*{2mm}1. Compute boundary maps $B_1$ and $B_2$ from $\X^1$ and $\X^2$.\newline
\hspace*{2mm}2. Build the adjacency matrix $\hat{A}$ for the lifted line graph $\G$.\newline
\hspace*{2mm}3. Feed $\hat{A}$ to $\mathcal{A}$ and obtain the final community structure $\hat{C}$ of $\G$.\newline
\hspace*{2mm}4. Under $(\ast)$, for $e \in \X^1$, $\mathcal{A}(e)$ = $\mathcal{A}(\bar{e})$. $\hat{C}$ can be directly projected to $\X$ and be interpreted as communities of links \hspace*{4.8mm} in $\X$.

\textbf{Output}: Link communities of $\X$ (communities of $\X^1$).
\end{algorithm}

\section{Experiments} \label{experiments}
In this section, we first use a simple synthetic example to give an intuitive illustration that our higher-order method can incorporate the higher-order connectivity information into partitioning---the reason why it works better than the lower-order methods. Then we evaluate the performance of our algorithm on real-world network data with higher-order information. Code is available at~\url{https://github.com/xinyiwu98/SC_link_comm}.

\begin{table}[h]
  \centering
  \caption{Datasets used in the experiments}
  \label{tab:datasets}
  \begin{tabular}{c|c|c|c}
    Name & Nodes & Links & Triangles\\
    \texttt{contact-high-school}\cite{Mastrandrea2015ContactPI} & 327 & 5,818 & 2,370\\
     \hline
    \texttt{contact-primary-school}\cite{Stehl2011HighResolutionMO} & 242 & 8,317 & 5,139 \\
     \hline
    \texttt{email-Enron}\cite{Benson2018SimplicialCA,Priebe2005ScanSO} & 144 & 1,344 & 1,159 \\
     \hline
    \texttt{email-Eu}\cite{Benson2018SimplicialCA,Yin2017LocalHG,Leskovec2007GraphED} & 979 & 29,299& 160,605 \\
     \hline
    \texttt{congress-bills}\cite{Fowler2006ConnectingTC,Benson2018SimplicialCA} &1712 & 66,102 & 86,164 \\
     \hline
    \texttt{senate-committees}\cite{Chodrow2021GenerativeHC,Stewart2005CharlesSI} &282 & 14,224 & 169,362 \\
     \hline
    \texttt{house-committees}\cite{Chodrow2021GenerativeHC,Stewart2005CharlesSI} &1290 & 126,155 & 2,996,327 \\
  \end{tabular}
\end{table}

\subsection{Baseline Methods}
In all the experiments, we compare our method (denoted by $\A$) with two well-known and commonly used types of graph-dimensional methods for link-based community detection: 
\begin{itemize}
    \item \textbf{Dendrogram cutting $S$}: Ahn et al. propose a dendrogram cutting method, which we denote as $S$. It is built on a similarity score and an objective called partition density~\cite{Ahn2010LinkCR}.
    \item \textbf{Modularity based approaches $C$, $D$, and $E_1$}: Evans and Lamboitte propose modularity optimization on three different weighted line graphs (corresponding adjacency matrices $C$, $D$ and $E_1$)~\cite{Evans2009LineGL}, and we use the Louvain method for the actual optimization in the experiments.
\end{itemize}

\vspace{-2ex}
\subsection{A Synthetic Illustrative Example}
 An implicit assumption of our proposed framework is that filled triangles in networks indicate tighter community structures. Consider the synthetic example shown in Figure \ref{fig:syn}A, where a group interaction occurs in the left triangle and the right triangle consists only of pairwise interactions. We show that our higher-order method $\A$ better clusters links by distinguishing the filled and the open triangles.
 
 We first ran the Louvain method with increasing time step $t$ on the lifted line graph represented by $\A$ with the triangle information. We then ran the Louvain method on the weighted line graphs represented by $C$, $D$ and $E_1$, for which we ignored the triangle information. Similarly, for the other baseline $S$, we ignored the higher-order connection and built the dendrogram based on the graph. We specified our desired number of link communities to be $2$. We ran all the methods until they output two communities and repeated for 30 times. If the optimal partition strictly favored a single community over two communities, we used the single community as the result. Figure \ref{fig:syn}B,C present a summary of link clustering results for all methods, where we observed four community patterns. For each pattern, each color represents a distinct community.  
 
 Our method $\A$ is built on both lower-order and higher-order connectivity and yielded the most reasonable pattern $4$ consistently. $C$, $E_1$, and $S$ all failed to identify pattern $4$. $D$ output pattern $4$ around $50\%$ of runs, due to the randomness of the Louvain method. However, the lack of higher-order information of $D$ led to inconsistent results between pattern $2$ and $4$, as its model cannot distinguish the two patterns. The results suggest that higher-order information indeed facilitates finding closely interrelated links.

 \begin{figure}[h]
 \centering
     \includegraphics[width = 8cm]{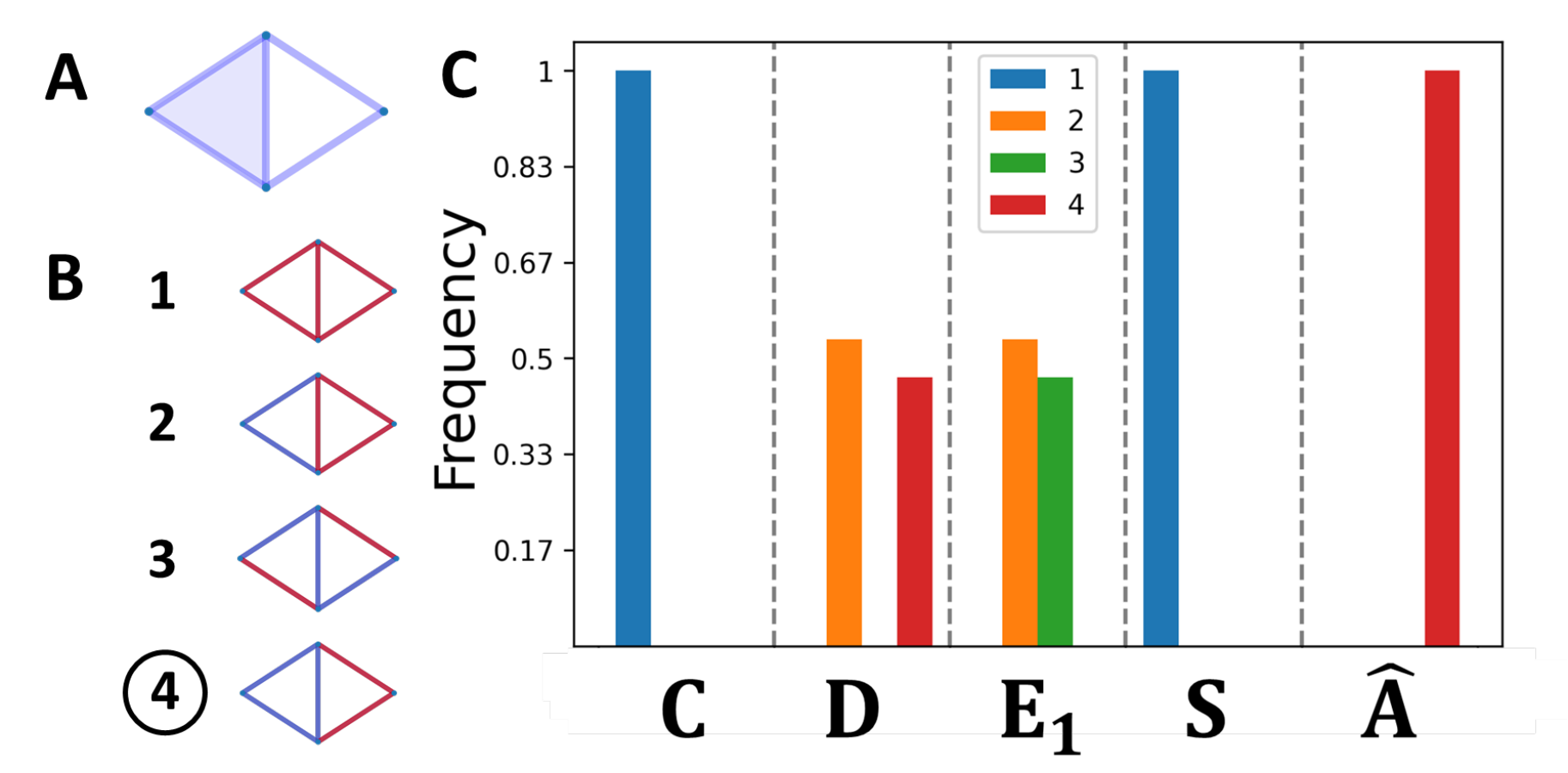}
    \caption{Link partitioning results of \textbf{A}, where the shaded area represents a 2-simplex; \textbf{B}: we observe $4$ partition patterns in total; \textbf{C}: summary of partitioning results for each method over $30$ runs.  Only our method $\A$, which captures the
    existence of the higher-order connection, produced the most reasonable pattern 4 consistently over all runs. }
    \label{fig:syn}
\end{figure}

\subsection{Datasets and Evaluation Measures}
We now compare our higher-order framework with graph-based baselines on real-world networks with higher-order information. We focus on social networks where filled triangles have expressive power and clear interpretations~\cite{Holland1970AMF,Granovetter1973TheSO,Newman2001ClusteringAP}. Summary statistics of the datasets used are present in Table \ref{tab:datasets}. See Appendix~\ref{app:data} for detailed descriptions for each dataset and its metadata used accordingly.

 As Ahn et al. point out, there are multiple aspects that one must consider when comparing different community detection algorithms such as the quality of the found communities, how much of the network is studied (coverage), and whether the algorithm finds overlapping structures~\cite{Ahn2010LinkCR}. Therefore, it is hard to justify comparing methods only along a single dimension and claim one method outperforms the rest. In fact, no algorithm can be universally optimal for all possible community detection tasks simultaneously~\cite{Peel2017TheGT}. Thus we adopted the composite performance measure proposed in \cite{Ahn2010LinkCR} to account for the different focuses of each method. The composite measure is comprised of four metrics measuring four distinct aspects of the detected communities. For each metric, the higher the value, the better the partitioning result on what it measures.
 The four metrics are as follows.
\paragraph{\textbf{Community quality}} Each dataset possesses metadata that describe each node. We assume that nodes in the same communities share more metadata than nodes in different ones. We then compute the \emph{enrichment score of node pair}:\vspace{-1ex}$$Enrichment = \frac{ \overline{s(i,j)}_{\text{i,j within same communities}}}{\overline{s(i,j)}_{\text{all pairs i,j}}}\,,$$
where $s(i,j)$ is a metadata-based similarity score between node $i$ and $j$, and $\overline{\cdot}$ denotes the average over the particular group. The larger the enrichment, the ``tighter'' the communities according to the metadata.
\paragraph{\textbf{Overlap quality}} 
We use overlap quality to measure how much information about the true overlap of nodes is learned from the link communities. Each dataset possesses additional metadata that are closely related to the number of true communities that each node participates in. To rigorously quantify the amount of information gained about the overlapping structures, we calculate the mutual information between the number of memberships of each node given by the link communities and the metadata. 
\paragraph{\textbf{Community coverage}} Community coverage measures how much of the network is studied. We count the fraction of nodes that belong to at least one nontrivial community given by the link partition, \textit{i.e.} community with three or more nodes.
\paragraph{\textbf{Overlap coverage}} Two algorithms may both give complete coverage, but one may extract more information by finding more overlapping communities. We measure how densely the found communities overlap by counting the average number of memberships in nontrivial communities that nodes are given by the link partition.

\subsection{Comparison Results on Modularity }
For the line graph methods including ours, we looked at the resulting communities at their own optimal modularity levels. For the other method $S$, we looked at the
communities at its optimal objective level. For each dataset, we used $100$ networks such that each network had higher-order information encoded by $2$-simplices. Since there is no \textit{a priori} knowledge how closely these different objectives align and how well the Louvain method is heuristically optimizing modularity in our case, we reported the modularity $Q(\A)$ of link communities found by different methods.  Comparison results in terms of  modularity $Q(\A)$ of link communities are presented in Table~\ref{tab:modularity}. For all the networks, the algorithms indeed found different solutions. The solutions also accorded with our expectation that modularity $Q(\A)$ was higher for our method. Although optimizing modularity is NP-hard~\cite{Brandes2006MaximizingMI}, the Louvain method does find good local optimal solutions with high modularity $Q(\A)$.

\begin{table*}[t]
\scriptsize
  \caption{Comparison results of \textbf{modularity $Q(\A)$} of the link communities. The modularity scores are averages over the $100$ networks. Bold denotes the highest result. }
  \label{tab:modularity}
   \begin{tabular}{c|c|c|c|c|c|c}
    Dataset & $\#$ networks (sizes) & line graph $C$ & line graph $D$ & line graph $E_1$ & dendrogram $S$ & higher-order $\hat{A}$\\
    \texttt{contact-high-school} & 100 (289-554)  & 0.723$\pm$\tiny{0.002} & 0.700$\pm$\tiny{0.003} & 0.669$\pm$\tiny{0.004}&
    0.669$\pm$\tiny{0.003}&\textbf{0.752}$\pm$\tiny{0.001}\\
     \hline
    \texttt{contact-primary-school} & 100 (720-1210)  & 0.670$\pm$\tiny{0.001} & 0.612$\pm$\tiny{0.001} & 0.531$\pm$\tiny{0.003}&
    0.406$\pm$\tiny{0.017}&
    \textbf{0.702}$\pm$\tiny{0.000}\\
     \hline
     \texttt{email-Enron}& 100  (218-430) & 0.702$\pm$\tiny{0.001} & 0.635$\pm$\tiny{0.002} & 0.594$\pm$\tiny{0.004}&
    0.578$\pm$\tiny{0.006}&
    \textbf{0.723}$\pm$\tiny{0.001}\\
     \hline
    \texttt{email-Eu} & 100 (165-409) & 0.708$\pm$\tiny{0.002} & 0.643$\pm$\tiny{0.003} & 0.609$\pm$\tiny{0.005}&
    0.565$\pm$\tiny{0.008}&\textbf{0.732}$\pm$\tiny{0.002}\\
     \hline
    \texttt{congress-bills} & 100 (142-333)& 0.718$\pm$\tiny{0.003} & 0.657$\pm$\tiny{0.003} & 0.622$\pm$\tiny{0.005} &
    0.598$\pm$\tiny{0.005} &\textbf{0.750}$\pm$\tiny{0.001}\\
     \hline
    \texttt{senate-committees} & 100 (1005-1670)& 0.670$\pm$\tiny{0.001} & 0.612$\pm$\tiny{0.001} & 0.531$\pm$\tiny{0.003} &
    0.208$\pm$\tiny{0.017} &\textbf{0.702}$\pm$\tiny{0.000}\\
     \hline
    \texttt{house-committees} & 100 (540-886)& 0.707$\pm$\tiny{0.001} & 0.658$\pm$\tiny{0.001} & 0.595$\pm$\tiny{0.003} &
    0.565$\pm$\tiny{0.010} &\textbf{0.731}$\pm$\tiny{0.001}\\
  \end{tabular}
  \centering
\end{table*}

\subsection{Comparison Results on Composite Performance}

For each dataset, we examined the same $100$ networks used in Table~\ref{tab:modularity} whose sizes ranged between $59-93$ nodes. The results for community quality and overlap quality are presented in Table ~\ref{tab:community_quality} and Table~\ref{tab:overlap}, respectively. 

\begin{table*}[h]\scriptsize\centering
  \caption{Comparison results of \textbf{overlap quality} given by the link communities. The scores are averages over the 100 networks.  The best result is highlighted in bold.}
  \label{tab:overlap}
  \begin{tabular}{c|c|c|c|c|c|c}
    Dataset & $\#$ networks (sizes) & line graph $C$ & line graph $D$ & line graph $E_1$ & dendrogram $S$ & higher-order $\hat{A}$\\
    \texttt{contact-high-school} & 100 (76-93) & 1.230$\pm$\tiny{0.014} & 1.164$\pm$\tiny{0.011} & 0.951$\pm$\tiny{0.010}& 1.519$\pm$\tiny{0.016}& \textbf{1.863}$\pm$\tiny{0.012}\\
     \hline
     \texttt{contact-primary-school} & 100 (72-91) & 2.259$\pm$\tiny{0.012} & 2.037$\pm$\tiny{0.012} & 1.377$\pm$\tiny{0.015}& 1.904$\pm$\tiny{0.050}&\textbf{2.816}$\pm$\tiny{0.0010}\\
      \hline
    \texttt{email-Enron}  & 100 (59-78) & 1.585$\pm$\tiny{0.015} & 1.335$\pm$\tiny{0.014} & 1.069$\pm$\tiny{0.012}& 1.782$\pm$\tiny{0.016}&\textbf{1.993}$\pm$\tiny{0.014}\\
     \hline
    \texttt{email-Eu}  & 100 (61-86) & 1.548$\pm$\tiny{0.022} & 1.146$\pm$\tiny{0.015} & 1.006$\pm$\tiny{0.013}& 1.712$\pm$\tiny{0.018}&\textbf{1.840}$\pm$\tiny{0.019}\\
     \hline
    \texttt{congress-bills}  & 100 (68-91) & 0.283$\pm$\tiny{0.007} & 0.289$\pm$\tiny{0.006} & 0.216$\pm$\tiny{0.006}& 0.350$\pm$\tiny{0.009}&\textbf{0.442}$\pm$\tiny{0.008}\\
     \hline
    \texttt{senate-committees}  & 100 (73-92) & 1.096 $\pm$\tiny{0.011} & 0.927$\pm$\tiny{0.014} & 0.420$\pm$\tiny{0.009}& 0.725$\pm$\tiny{0.029}&\textbf{1.505}$\pm$\tiny{0.012}\\
     \hline
    \texttt{house-committees}  & 100 (91-99) & 0.627 $\pm$\tiny{0.001} & 0.691$\pm$\tiny{0.009} & 0.401$\pm$\tiny{0.008}& 0.797$\pm$\tiny{0.013}&\textbf{0.905}$\pm$\tiny{0.011}\\
  \end{tabular}
    \centering
\end{table*}

In all of the $7$ datasets, our method achieved the best overlap quality. On average, our method achieved an improvement of $20.5\%$ with respect to the second best method. In addition, the gain in overlap quality compared to $C$, $D$ and $E_1$ was across different time step $t$. The first row of Figure~\ref{fig:overlap} shows the comparison results of overlap quality on three networks built from the entire corresponding datasets, \texttt{email-Enron}, \texttt{contact-high-school} and \texttt{contact-primary-school}. For all the methods, overlap quality overall decreases as time step $t$ increases because the communities get coarser. Nonetheless, compared to $C$, $D$ and $E_1$, the gain in overlap quality using higher-order information is consistent across random walk time steps $t$.

\begin{table*}[t]\scriptsize\centering
  \caption{Comparison results of \textbf{community quality} given by link communities. The scores are averages over the 100 networks. The best result is highlighted in bold.}
  \label{tab:community_quality}
  \begin{tabular}{c|c|c|c|c|c|c}
    Dataset & $\#$ networks (sizes) & line graph $C$ & line graph $D$ & line graph $E_1$ & dendrogram $S$ & higher-order $\hat{A}$\\
    \texttt{contact-high-school} & 100 (76-93) & 2.082$\pm$\tiny{0.028} & 2.670$\pm$\tiny{0.030} & 2.350$\pm$\tiny{0.031}& \textbf{5.880}$\pm$\tiny{0.061}& 2.642$\pm$\tiny{0.033}\\
     \hline
     \texttt{contact-primary-school} & 100 (72-91) & 
     1.073$\pm$\tiny{0.003} & 1.304$\pm$\tiny{0.005} & 1.240$\pm$\tiny{0.006}& \textbf{1.942}$\pm$\tiny{0.053}&1.142$\pm$\tiny{0.004}\\
      \hline
    \texttt{email-Enron}  & 100 (59-78) & 1.141$\pm$\tiny{0.007} & 1.200$\pm$\tiny{0.010} & 1.224$\pm$\tiny{0.010}& \textbf{1.514}$\pm$\tiny{0.021}&1.217$\pm$\tiny{0.009}\\
     \hline
    \texttt{email-Eu}  & 100 (61-86) & 1.774$\pm$\tiny{0.029} & 2.547$\pm$\tiny{0.041} & 2.301$\pm$\tiny{0.031}& \textbf{5.030}$\pm$\tiny{0.127}&
    2.416$\pm$\tiny{0.052}\\
     \hline
    \texttt{congress-bills}  & 100 (68-91) & 1.008$\pm$\tiny{0.003} & 1.010$\pm$\tiny{0.003} & 1.012$\pm$\tiny{0.003}& \textbf{1.028}$\pm$\tiny{0.006}&1.012$\pm$\tiny{0.004}\\
     \hline
    \texttt{senate-committees}  & 100 (73-92) & 0.999$\pm$\tiny{0.000} & 0.996$\pm$\tiny{0.001} & 0.998$\pm$\tiny{0.001}& \textbf{1.000}$\pm$\tiny{0.001}&\textbf{1.000}$\pm$\tiny{0.000}\\
     \hline
    \texttt{house-committees}  & 100 (91-99) & 1.001$\pm$\tiny{0.001} & \textbf{1.003}$\pm$\tiny{0.002} & 1.000$\pm$\tiny{0.001}& 1.001$\pm$\tiny{0.003}&1.001$\pm$\tiny{0.002}\\
  \end{tabular}
  \centering
\end{table*}

\begin{figure}[b]
    \centering
    \includegraphics[width = 10cm]{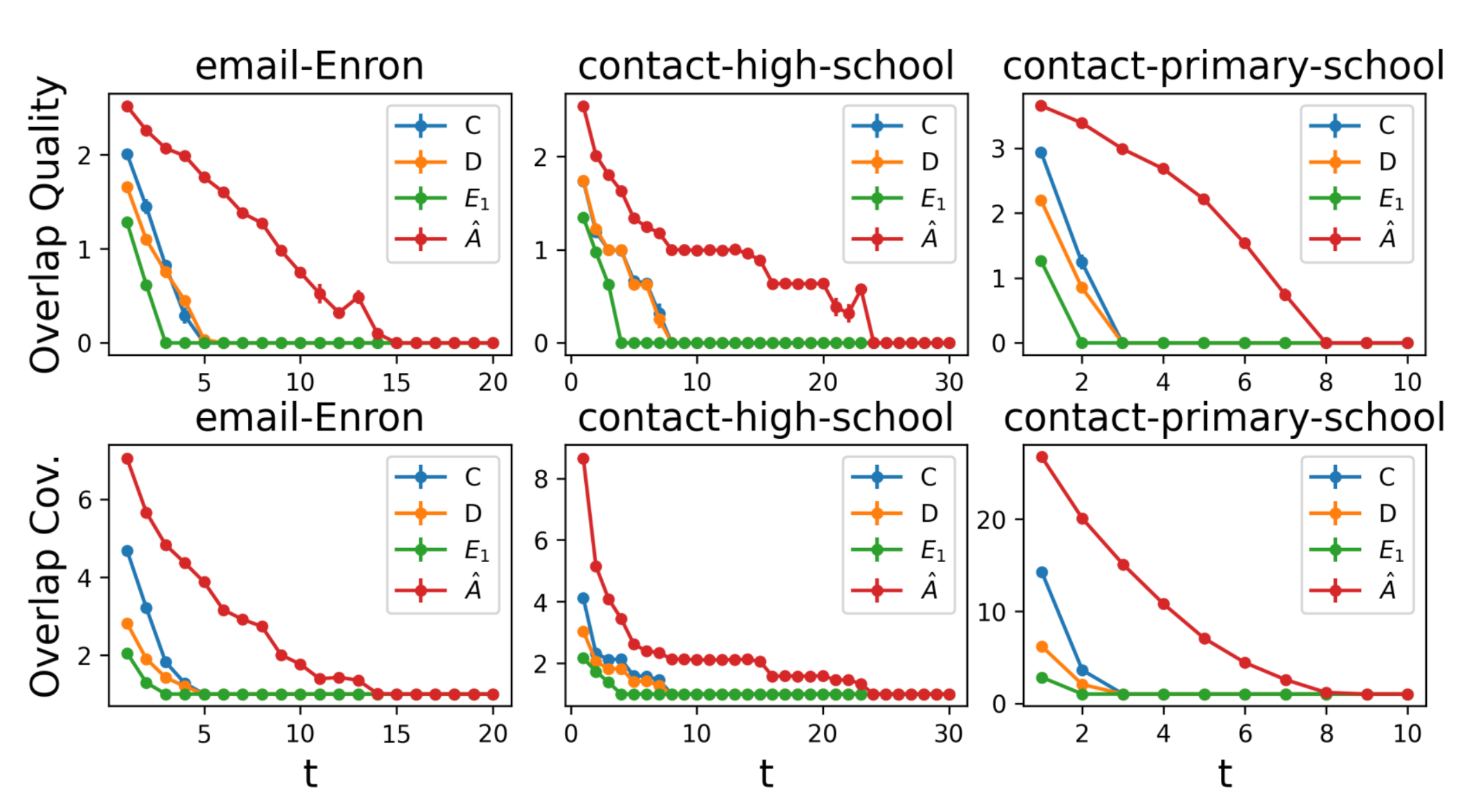}
    \caption{Comparison of overlap quality (top row) and overlap coverage (bottom row) of resulting partitions across random walk time steps $t$ for \texttt{email-Enron}, \texttt{contact-high-school} and \texttt{contact-primary-school}. Our method (red) consistently provides the highest values and outperforms the line graph baseline methods ($C$, $D$, $E_1$) at all time steps $t$ for both metrics.}
\label{fig:overlap}
\vspace{-2ex}
\end{figure}

In $6$ of the $7$ datasets, the baseline method $S$ achieved the best community quality. All the line graph methods including ours had rather similar performances. We believe that this is because of the disconnection between the community structure given by link communities of the line graph with high modularity and the chosen community metadata. Often, community structures do not align with the observed metadata~\cite{Peel2017TheGT}. However, the fact that an algorithm fails to recover a community structure close to the metadata does not imply that
it is “failing” in its stated objective, namely, modularity optimization of the line graph. The algorithm is discovering different but still valuable community structures.

We then computed community coverage and overlap coverage accordingly for all the networks in every dataset. The results are shown in Table~\ref{tab:coverage}. Notably, our higher-order method outperformed all the baseline methods in terms of both coverage metrics. For overlap coverage, our method achieved an improvement of $47.0\%$ with respect to the second best method. Like the gain in overlap quality, the improvement in overlap coverage compared to the line graph baselines was also consistent across all time steps $t$. See the second row of Figure \ref{fig:overlap} for the comparison results of overlap coverage on the entire networks of \texttt{email-Enron}, \texttt{contact-high-school}, and \texttt{contact-primary-school}.

\begin{figure*}[t]
\centering
    \includegraphics[width=\textwidth]{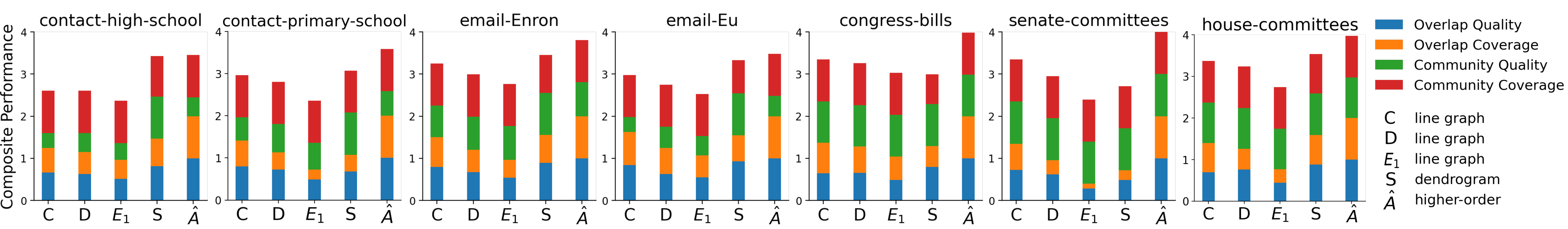}
\caption{Comparison results of composite performance of link-based community detection methods on real world networks. Higher-order link clustering ($\A$) finds community structures with overall better quality and higher-coverage in all datasets. }
\label{fig:composite}
\end{figure*}

Finally, in order to compare the composite performance on each dataset, we followed the procedure in Ahn et al.~\cite{Ahn2010LinkCR} and renormalized all values for each metric such that the maximum value is $1$ for the best performing method. Figure~\ref{fig:composite} displays the comparison results of composite performance on all datasets. Our higher-order method outperformed the graph-based methods on the composite performance in all the datasets, with an average improvement of $12.0\%$ over the second best score. The results show that like many other problems in network science, higher-order link-based community detection reveals more relevant and more complex community structures in networks as it captures the valuable higher-order information.  

\begin{table*}[h]\scriptsize\centering
  \caption{Comparison results of \textbf{community coverage} and \textbf{overlap coverage} given by the link communities. The scores are averages over the 100 networks. The best result is highlighted in bold. }
  \label{tab:coverage}

  \begin{tabular}{c|c|c|c|c|c|c|c}
     Dataset & $\#$ networks (sizes) & cov. & line graph $C$ & line graph $D$ & line graph $E_1$ & dendrogram $S$ & higher-order $\hat{A}$\\
     \multirow{ 2}{*}{\texttt{contact-high-school}} &\multirow{ 2}{*}{100 (76-93)}& comm &\textbf{1.000}$\pm$\tiny{0.000}       &\textbf{1.000}$\pm$\tiny{0.000}    &\textbf{1.000}$\pm$\tiny{0.000}    &0.959$\pm$\tiny{0.003}
     &\textbf{1.000}$\pm$\tiny{0.000}       \\  
    & & overlap  &
    2.003$\pm$\tiny{0.025} & 1.784$\pm$\tiny{0.014}&  1.544$\pm$\tiny{0.010}  &2.222$\pm$\tiny{0.029}
    &\textbf{3.417}$\pm$\tiny{0.037}       \\  
    \hline
    \multirow{ 2}{*}{\texttt{contact-primary-school}} &\multirow{ 2}{*}{100 (72-91)}& comm &\textbf{1.000}$\pm$\tiny{0.000}       &\textbf{1.000}$\pm$\tiny{0.000}    &\textbf{1.000}$\pm$\tiny{0.000}    &0.995$\pm$\tiny{0.001}
     &\textbf{1.000}$\pm$\tiny{0.000}       \\  
    & & overlap  &
    6.255$\pm$\tiny{0.074} & 4.196$\pm$\tiny{0.051}&  2.382$\pm$\tiny{0.024}  &4.080$\pm$\tiny{0.138}
    &\textbf{10.221}$\pm$\tiny{0.096}       \\  
    \hline
    \multirow{ 2}{*}{\texttt{email-Enron}} &\multirow{ 2}{*}{100 (59-78)}& comm &\textbf{1.000}$\pm$\tiny{0.000}       &\textbf{1.000}$\pm$\tiny{0.000}    &\textbf{1.000}$\pm$\tiny{0.000}    &0.903$\pm$\tiny{0.005}
     &\textbf{1.000}$\pm$\tiny{0.000}       \\  
    & & overlap  &
    2.820$\pm$\tiny{0.037} & 2.122$\pm$\tiny{0.022}&  1.695$\pm$\tiny{0.014}  &2.640$\pm$\tiny{0.038}
    &\textbf{4.017}$\pm$\tiny{0.047}       \\  
    \hline
    \multirow{ 2}{*}{\texttt{email-Eu}} &\multirow{ 2}{*}{100 (61-86)}& comm &\textbf{1.000}$\pm$\tiny{0.000}       &\textbf{1.000}$\pm$\tiny{0.000}    &\textbf{1.000}$\pm$\tiny{0.000}    &0.783$\pm$\tiny{0.006}
     &\textbf{1.000}$\pm$\tiny{0.000}       \\  
    & & overlap  &
    2.456$\pm$\tiny{0.043} & 1.939$\pm$\tiny{0.025}&  1.639$\pm$\tiny{0.014}  &1.918$\pm$\tiny{0.036}
    &\textbf{3.137}$\pm$\tiny{0.053}       \\  
    \hline
     \multirow{ 2}{*}{\texttt{congress-bills}} &\multirow{ 2}{*}{100 (68-91)}& comm &\textbf{1.000}$\pm$\tiny{0.000}       &\textbf{1.000}$\pm$\tiny{0.000}    &\textbf{1.000}$\pm$\tiny{0.000}    &0.705$\pm$\tiny{0.007}
     &\textbf{1.000}$\pm$\tiny{0.000}       \\  
    & & overlap  &
    1.794$\pm$\tiny{0.022} & 1.546$\pm$\tiny{0.017}&  1.374$\pm$\tiny{0.011}  &1.230$\pm$\tiny{0.022}
    &\textbf{2.479}$\pm$\tiny{0.032}       \\  
    \hline
     \multirow{ 2}{*}{\texttt{senate-committees}} &\multirow{ 2}{*}{100 (73-92)}& comm &\textbf{1.000}$\pm$\tiny{0.000}       &\textbf{1.000}$\pm$\tiny{0.000}    &\textbf{1.000}$\pm$\tiny{0.000}    &0.999$\pm$\tiny{0.000}
     &\textbf{1.000}$\pm$\tiny{0.000}       \\  
    & & overlap  &
    8.478$\pm$\tiny{0.118} & 4.679$\pm$\tiny{0.099}&  1.645$\pm$\tiny{0.017}  &3.147$\pm$\tiny{0.138}
    &\textbf{13.738}$\pm$\tiny{0.153}       \\  
    \hline
     \multirow{ 2}{*}{\texttt{house-committees}} &\multirow{ 2}{*}{100 (91-99)}& comm &\textbf{1.000}$\pm$\tiny{0.000}       &\textbf{1.000}$\pm$\tiny{0.000}    &\textbf{1.000}$\pm$\tiny{0.000}    &0.946$\pm$\tiny{0.003}
     &\textbf{1.000}$\pm$\tiny{0.000}       \\  
    & & overlap  &
     4.001$\pm$\tiny{0.059} & 2.814$\pm$\tiny{0.035}&  1.837$\pm$\tiny{0.021}  &3.999$\pm$\tiny{0.075}
    &\textbf{5.661}$\pm$\tiny{0.083}       \\  
  \end{tabular}

\end{table*}

\section{Discussion}
In this paper, we use a novel random walk on links of simplicial complexes defined according to higher-order Laplacians to perform link partitioning that incorporates higher-order network information. We find that using higher-order information provides substantial improvements in discovering overlapping community structure through links. These results emphasize the value of higher-order network data for studying the structure and behavior of networks and necessitates higher-order network models that can take account of this information. Since our method connects to the spectral theory for simplicial complexes through random walks, future research should aim to study this relationship more.

\section{Acknowledgement}
Xinyi Wu would like to thank Zihui Wu for inspiring discussions. This research has been supported in part by ARO grant number W911NF-19-1-0057, and a Vannevar Bush Fellowship from the Office of the Secretary of Defense.

\bibliographystyle{plain}
\bibliography{IEEEabrv}  






\appendix

\section{Proofs}\label{app:pf}
\begin{proof}[Proof of Theorem~\ref{thm:lifting} ]
First, $-L_1V^\top = V^\top(\hat{A}^l+\hat{A}^u)$ (\cite{Schaub2020RandomWO}, Lemma 3.2). Then 
observe that by definition, $\hat{\textbf{P}} = \hat{A}\text{diag}(\hat{A}\textbf{1})^{-1} =
\hat{A}\hat{W}^{-1} =  \frac{1}{2}\hat{A}
\begin{bmatrix}
D_{tot}^{-1} & 0 \\
0 & D_{tot}^{-1}
\end{bmatrix}$
and $\hat{A}^s =\begin{bmatrix}
D_{tot} & 0 \\
0 & D_{tot}
\end{bmatrix} $. So $V^\top D_{tot} = V^\top \hat{A}^s$. It follows that $$V^\top \hat{A}\hat{W}^{-1} = (D_{tot} - L_1)V^\top\hat{W}^{-1} = (D_{tot} - L_1)\frac{1}{2}D_{tot}^{-1}V^\top = \frac{1}{2}(I-\mathcal{L}_1)V^\top\,.$$
\end{proof}

\begin{proof}[Proof of Proposition~\ref{prop:block}]
The symmetry of $\hat{A}$ follows directly from the symmetry of $\hat{A}^l$ and $\hat{A}^u$. Writing out $\hat{A}^l$ and $\hat{A}^u$ explicitly using the formulas from Theorem~$\ref{thm:lifting}$ gives the block form.
\end{proof}

\begin{proof}[Proof of Theorem~\ref{thm:main}]
Consider the beginning state, where all the links are put in their own communities. Then according to \eqref{eq:mod}, the definition of modularity, putting oriented link $e_0$ into the community $\hat{C}$ will incur a change in modularity by 

$$\Delta Q(e_0 \to \hat{C}) = \frac{\hat{K}_{e_0,\hat{C}}}{2\hat{m}} - \frac{\sum_{tot}\hat{k}_{e_0}}{2\hat{m}^2}\,,$$
where 
\begin{itemize}
    \item $\sum_{tot} = \sum_{e\in \hat{C}}\hat{k}_e$  is sum of all link weights in $\hat{C}$;
    \item $\hat{K}_{e_0,\hat{C}} = \sum_{e\in \hat{C}}\hat{A}_{e_0,e} + \sum_{e\in \hat{C}}\hat{A}_{e,e_0}$ is the sum of link weights connecting $e_0$ and $\hat{C}$.
\end{itemize}
Then $\Delta Q(e_0 \to \bar{e}_0) \geq 0$, following from $(\ast)$. Moreover, for any other neighboring nodes $e'$ for $e_0$, $\Delta Q(e_0 \to \bar{e}_0) \geq \Delta Q(e_0 \to e')\,,$
which can be shown by considering
\begin{align*}
    \Delta Q(e_0 \to \bar{e}_0) - \Delta Q(e_0 \to e') = \frac{\hat{K}_{e_0,\bar{e}_0}-\hat{K}_{e_0,e'}}{2\hat{m}} - \frac{(\hat{k}_{\bar{e}_0}-\hat{k}_{e'})\hat{k}_{e_0}}{2\hat{m}^2}\,.
\end{align*}
Notice that $\hat{K}_{e_0,e}$ maximizes with $e = \bar{e}_0$ such that $\hat{K}_{e_0,\bar{e}_0} = 2(2 + deg(e_0))$, and the difference with other $\hat{K}_{e_0,e}$ is at least 2, while $(\hat{k}_{\bar{e}_0}-\hat{k}_{e'})\hat{k}_{e_0}\leq 2\hat{m}$. Hence  $\Delta Q(e_0 \to \bar{e}_0) - \Delta Q(e_0 \to e') \geq 0, \forall e'$. The Louvain method will put $e_0$ into the community of $\bar{e}_0$.

Then consider $e_1$. Let the community of $e_0$ and $\bar{e}_0$ be $\hat{C}_{0}$. Based on a similar reasoning as above, it suffices to show that the Louvain method will put $e_1$ in the community of $\bar{e}_1$, not $\hat{C}_0$. Notice that 
\begin{align*}
    \Delta Q(e_1 \to \bar{e}_1) - \Delta Q(e_0 \to \hat{C}_0) = \frac{\hat{K}_{e_1,\bar{e}_1}-\hat{K}_{e_1,\hat{C}_0}}{2\hat{m}} - \frac{(\hat{k}_{\bar{e}_1}-2\hat{k}_{e_0})k_{e_1}}{2\hat{m}^2}\,.
\end{align*}
We have  $\hat{K}_{e_1,\bar{e}_1} = 2(2 + deg(e_1))$, and $\hat{K}_{e_1,\hat{C}_0} \leq  2(1 + deg(e_1))$. So  $\Delta Q(e_1 \to \bar{e}_1) - \Delta Q(e_0 \to \hat{C}_0) \geq 0$, and we will put $e_1$ into the community of $\bar{e}_1$.
Apply the same reasoning to $e_2,...,\bar{e}_N$, after we iterate links with the chosen orientations, we get communities $\hat{C}_0 = \{e_0,\bar{e}_0\}, \hat{C}_1 = \{e_1,\bar{e}_1\},...,\hat{C}_N = \{e_N,\bar{e}_N\}.$

Finally, we notice that by the symmetry between $e$ and $\bar{e}$, $\hat{C}_0$, $\hat{C}_1$, ..., $\hat{C}_N$ is a stable partition of $\G$, in the sense that further iteration of moving a single node is not going to increase $Q$. Thus we conclude that we end the first phase of the Louvain algorithm with $\hat{C}_0$, $\hat{C}_1$, ..., $\hat{C}_N$, and the second phase of the algorithm starts with $N$ supernodes---where we collapse two orientations of the same link as one node.
\end{proof}

\begin{proof}[Proof of Theorem~\ref{cor:main}]
The result follows directly from Proposition~\ref{prop:block} and Theorem~\ref{thm:main}.
\end{proof}

\begin{proof}[Proof of Proposition~\ref{prop:pblock} ]
The results follows directly from Proposition~\ref{prop:block}, the block form of $\hat{A}$.
\end{proof}

\begin{proof}[Proof of Proposition~\ref{prop:decom} ]
\begin{align*}
&\det\begin{pmatrix}
\hat{P}_I - \lambda I & \hat{P}_{II} \\
\hat{P}_{II} & \hat{P}_{I} -\lambda I 
\end{pmatrix} \\
= \,\, &\det\left(\hat{P}_I +\hat{P}_{II} - \lambda I\right)\cdot \det\left(\hat{P}_I -\hat{P}_{II} - \lambda I\right)\,.
\end{align*}
\end{proof}

\begin{proof}[Proof of Proposition~\ref{prop:even}]
$$\begin{bmatrix}
\hat{P}_I & \hat{P}_{II} \\
\hat{P}_{II} & \hat{P}_{I}
\end{bmatrix}\begin{bmatrix}
x\\
x
\end{bmatrix} = 
\begin{bmatrix}
(\hat{P}_I +\hat{P}_{II})x \\
(\hat{P}_I +\hat{P}_{II})x
\end{bmatrix}
=\lambda\begin{bmatrix}
x\\
x
\end{bmatrix}\,.
$$
\end{proof}

\begin{proof}[Proof of Corollary~\ref{cor:even}]
$\hat{P}_I + \hat{P}_{II}$ is a stochastic matrix describing the random walk on the lifted ``supernode" graph $S(\G)$ with adjacency matrix $\hat{A}_I + \hat{A}_{II}$. So $1$ is an eigenvalue for $\hat{P}_I + \hat{P}_{II}$. The rest follows from Proposition~\ref{prop:even}.
\end{proof}

\begin{proof}[Proof of Corollary~\ref{cor:stationary}]
Notice that in Corollary~\ref{cor:even}, $x$ is the stationary solution to $\hat{P}_I + \hat{P}_{II}$. Let $\pi = x^\top$, then $\hat{\pi} = [\pi\,\, \pi]^\top$.
\end{proof}

\section{Data Descriptions}\label{app:data}
Here are detailed descriptions of the seven real-world datasets used in section~\ref{experiments}:
\begin{itemize}
    \item Physical contact data (\texttt{contact-primary-school}, \texttt{contact-high-school}): nodes are individuals, and simplices form when individuals are in proximity of one another within a short time period. 
    \paragraph{Community quality} For the metadata that measure node similarity, we used the classroom that each student belongs to. \textit{e.g.} $s(i,j) = 1$ if and only if two students belong to the same classroom. We made a similar hypothesis as Ahn et al. had in the mobile phone network that social contact is more frequent for people that are geographically related~\cite{Ahn2010LinkCR}. This phenomenon was indeed observed in the original contact data collection processes~\cite{Stehl2011HighResolutionMO,Mastrandrea2015ContactPI}.
    \paragraph{Overlap quality} For the metadata that can serve as a reasonable proxy for the number of communities that each node belongs to, we used the total number of contacts that each individual made during the observation window. Again, this operated under the similar assumption as Ahn et al. had in the mobile phone network that frequent contact makers may fulfill broader roles in their social networks~\cite{Ahn2010LinkCR}. 
    \newline
    \item Communication data (\texttt{email-Eu},  \texttt{email-Enron}): nodes are email addresses and a simplex is formed if individuals send one another emails over a time period. For \texttt{email-Eu}, the time period is the entire observation window; for \texttt{email-Enron}, the time period is seven days.
    \paragraph{Community quality} For the metadata that measure node similarity, we similarly used the department memberships (\texttt{email-Eu}) and job positions (\texttt{email-Enron}) of the nodes, assuming that people communicate more often if they belong to the same department or have similar job titles. 
    \paragraph{Overlap quality} For the metadata that can serve as a reasonable proxy for the number of communities each node participates in,  we likewise used the total number of emails that each email address sent over the observation window.
    \newline
    \item U.S. Congress collaboration data (\texttt{congress-bills}, \texttt{senate-committees}, \texttt{house-committees}): nodes are members of Congress and a simplex is the set of members co-sponsoring a bill or being part of a same committee.
    \paragraph{Community quality} For congress data, Ahn et al. measured the political and ideological similarity between each pair of congressmen~\cite{Ahn2010LinkCR}. we used political party affiliation as our similarly measure for \texttt{congress-bills}, \texttt{senate-committees} and \texttt{house-committees}. 
    \paragraph{Overlap quality} Ahn et al. used the seniority of each congressperson, measured as the number of elected terms that person has served under the assumption that longer-serving member would more easily participate in multiple collaborations than those who are more newly elected~\cite{Ahn2010LinkCR} . we adopted the number of elected terms as our proxy for \texttt{congress-bills}, \texttt{senate-committees} and \texttt{house-committees}. 
\end{itemize}

\end{document}